\documentclass[%
 reprint,
 amsmath,amssymb,
 aps,
]{revtex4-2}

\usepackage{graphicx}
\usepackage{dcolumn}
\usepackage{bm}

\usepackage{amsmath,amssymb,amsthm,easybmat,mathtools}
\usepackage{braket}
\usepackage{graphicx}
\usepackage{bbold}
\usepackage{soul}

\newtheorem{corollary}{Corollary}

\newtheorem{lemma}{Lemma}
\newtheorem{proposition}{Proposition}

\newtheorem{theorem}{Theorem}
\theoremstyle{definition}
\newtheorem{remark}{Remark}
\newtheorem{example}{Example}

\newcommand{\mc}{\mathcal}
\newcommand{\tr}{\text{Tr}}
\newcommand{\mbb}{\mathbb}
\newcommand{\mbf}{\mathbf}
\newcommand{\id}{\text{id}}
\newcommand{\cptp}{\text{CPTP}}
\newcommand{\op}[2]{|#1\rangle\langle #2|}
\newcommand{\ip}[2]{\langle #1| #2 \rangle}

\usepackage{color}
\definecolor{cool_green}{rgb}{0.0, 0.5, 0.0}

\begin{document}

\preprint{APS/123-QED}

\title{Quantum Channel Masking}

\author{Anna Honeycutt }
\email{annagh2@illinois.edu}
\affiliation{Department of Physics, \\University of Illinois Urbana-Champaign, Urbana, IL, USA}

\author{Hailey Murray}
\email{hm649@cornell.edu}
\affiliation{Department of Physics,\\
Embry-Riddle Aeronautical University, Prescott, AZ, USA}
\affiliation{School of Applied and Engineering Physics,\\
Cornell University, Ithaca, NY, USA}

\author{Eric Chitambar}
\email{echitamb@illinois.edu}
\affiliation{Department of Electrical and Computer Engineering, 
Coordinated Science Laboratory, \\
University of Illinois Urbana-Champaign, Urbana, IL, USA}

\date{\today}

\begin{abstract}
Quantum masking is a special type of secret sharing in which some information gets reversibly distributed into a multipartite system, leaving the original information inaccessible to each subsystem. This paper proposes a dynamical extension of quantum masking to the level of quantum channels.  In channel masking, the identity of a channel becomes locally hidden but still globally accessible after its output is sent through a bipartite broadcasting channel.  We first characterize all families of $d$-dimensional unitaries that can be isometrically masked, a condition that holds even in the presence of depolarizing noise.  For the case of qubits, we identify which families of Pauli channels can be masked, and we prove that a qubit channel can be masked against the identity if and only if it is unital and has a pure-state fixed point.  Masking against the identity describes a scenario in which channel noise becomes completely delocalized through a broadcast map and undetectable through subsystem dynamics alone.

\end{abstract}

\maketitle


There are several no-go theorems of great importance to quantum information processing, such as the no-cloning \cite{Wootters1982} and the no-broadcasting theorems \cite{Barnum1996,Kalev2008}. Additionally, the no-deleting theorem states that we cannot delete unknown quantum information in a closed system \cite{Pati2000}, and due to the no-hiding theorem, quantum information that is lost in one subsystem of a closed composite system must be recoverable from the other subsystem \cite{Braunstein2007}. These no-go theorems have had significant applications in error correction \cite{Kretschmann-2008a}, quantum key distribution \cite{Gisin2002} and secret sharing \cite{Karlsson1999, Hillery1999}.

 As a modification to the original problem of hiding quantum information \cite{Braunstein2007}, Modi et al. proposed the task of quantum masking \cite{Modi2018}. This involves reversibly splitting states from a given set into two parts such that the identity of the original state cannot be determined by examining either of the parts individually (see Fig. \ref{Fig:state_masking}).  While the no-hiding theorem implies this is impossible for arbitrary states drawn from a Hilbert space with a pure ancilla system, it was shown that masking is possible for restricted sets of states \cite{Modi2018}.  In fact, maskable sets of states can be much richer than cloneable sets, the latter consisting of just orthonormal pure states.
For qubits, any set of states whose Bloch vectors fall on a single disk in the Bloch sphere can be masked \cite{Liang2019, Ding2020}, and a similar geometrical structure appears to hold for the sets of higher-dimensional maskable states  \cite{Ding2020}.  Additionally, it has been shown that all qubit states can be masked if we move beyond bipartite splitting and consider multipartite masking \cite{Li2018}.  This is a direct consequence of the fact that every quantum error correcting code accomplishes the task of quantum masking, and connections between the two have been studied \cite{Li2018, Han-2020a}.  In particular, multipartite masking enables the possibility of quantum secret sharing \cite{Karlsson1999,Hillery1999,Zukowski1998}, where quantum information is distributed among several parties such that the original information is inaccessible without collaboration.  Masking also has applications in other cryptographic tasks such as quantum bit commitment \cite{Mayers1997, Modi2018}. 

\begin{figure}[t]
\includegraphics[width=8cm]{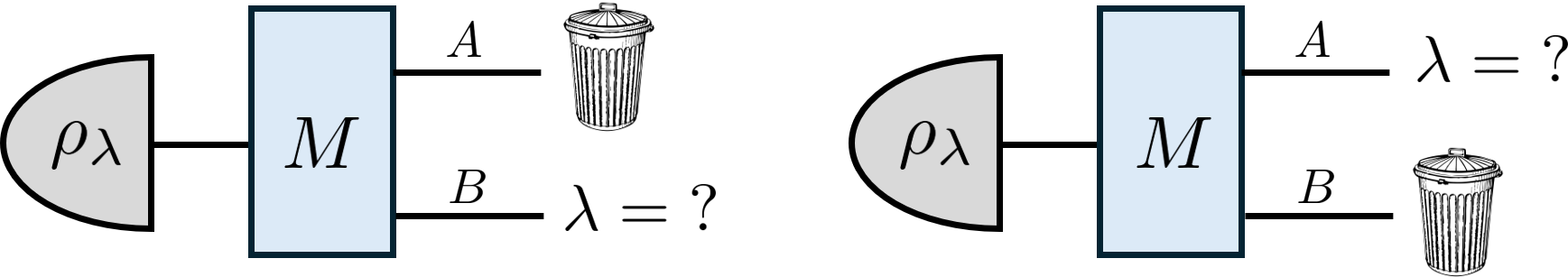}
\caption{In state masking, a set of states $\mc{S}=\{\rho_\lambda\}_\lambda$ is masked by the isometry $M$ such that the reduced state outputs are independent of $\lambda$. Here, the trash can is used to represent a discarding
of the subsystem, which mathematically corresponds to a partial trace.}
\label{Fig:state_masking}
\end{figure}


\begin{figure}[t]
    \centering
    \includegraphics[width=8cm]{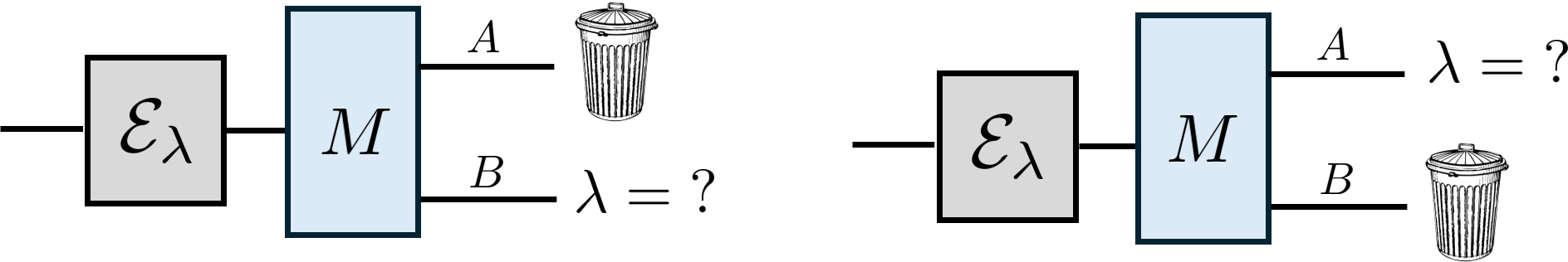}
    \caption{In channel masking, a set of channels or gates $\mc{S}=\{\mc{E}_\lambda\}_\lambda$ is masked by the isometry $M$ such that the reduced channels are independent of $\lambda$.}
\label{Fig:channel_masking}
\end{figure}

Here, we propose an extension of quantum masking to sets of quantum channels or quantum gates.  As depicted in Fig. \ref{Fig:channel_masking}, this is a generalization of state masking in which some reversible splitting is performed as post-processing to a random quantum channel $\mc{E}_\lambda$ chosen from some set $\mc{S}=\{\mc{E}_\lambda\}_\lambda$.  The masking is successful if the two reduced channels are independent of $\lambda$, thereby hiding the ``which channel'' information from local recovery; only by examining the global channel $\mc{U}_M\circ\mc{E}_\lambda(\cdot)\equiv M[\mc{E}(\cdot)]M^\dagger$ can the value $\lambda$ be perfectly recovered.  This problem is similar in spirit but conceptually distinct from the well-studied task of hiding the ``which channel'' information of a state-dependent channel between a sender and one receiver \cite{Shannon-1958a, Boche-2016a, Pereg-2021a}.  In that scenario, one attempts to minimize the leakage rate of the state parameter $\lambda$ through channel encoding.  In contrast, the problem considered here involves a single-shot use of the channel or gate, and the parameter $\lambda$ gets masked in the bipartite correlations generated after the channel or gate. 

A special case of this problem considers masking just a single noisy channel $\mc{E}$ from the ideal identity channel, $\text{id}$. In this scenario, the masker serves as a broadcasting map $M:\mc{H}_Q\to \mc{H}_{AB}$ that completely hides the effect of $\mc{E}$ from the subsystems $A$ and $B$, i.e. 
\begin{align}
\label{Eq:identity-mask}
\tr_X [M\rho M^\dagger] = \tr_X[M\mc{E}(\rho) M^\dagger],\quad X\in\{A,B\},\;\forall \rho.
\end{align}
Since the broadcasting can be inverted, the information of whether or not the noisy map is present must be accessible somewhere in the overall system, and in this case, that information lies entirely in the correlations between subsystems $A$ and $B$.


In this work, we focus on the masking of qubit channels and sets of qudit (unitary) gates.  As one of our main results, we derive necessary and sufficient conditions for when an arbitrary number of qudit gates can be masked, and we provide an explicit construction of a masker.  For qubit gates, these conditions admit an appealing geometrical interpretation on the Bloch sphere.  We then focus on the masking of qubit channels and characterize all the families of noisy Pauli channels that can be masked.  We further identify all qubit channels whose action can be masked against the identity (i.e. satisfy Eq. \eqref{Eq:identity-mask}).  Finally, we consider the masking of classical channels.  While classical Boolean circuits are unable to mask families of classical channels, we observe that conjugate coding provides a natural construction of a quantum masker of arbitrary classical channels.  Before presenting our results, we provide a more precise statement of our problem and an overview of definitions.  We then proceed to our main results and conclude the paper with a brief discussion on future directions of work.



\medskip

\section{Preliminaries}

We begin by defining the notion of state masking presented in Ref. \cite{Modi2018}.  Let $M:\mc{H}_Q\to\mc{H}_{AB}$ be an isometric mapping (meaning $M^\dagger M=\mbb{1}_Q$) from system $Q$ into a bipartite system $AB$.  The map is said to mask quantum information contained in the set of states $\mc{S}=\{\ket{\psi_\lambda}\}_\lambda \subset \mc{H}_Q$ if the states $\{\ket{\Psi_\lambda}_{AB}:=M\ket{\psi_\lambda}_Q\}_\lambda$ have the same marginal states, i.e. 
    \begin{align}
        \rho_A &= \tr_B\op{\Psi_\lambda}{\Psi_\lambda},
        &\rho_B &= \tr_A\op{\Psi_\lambda}{\Psi_\lambda}
    \end{align}
for all $\lambda$. By definition the marginal states contain no information on the original input state. However, the choice of isometric mapping $M$ allows for the recovery of this information.

We remark that the definition of state masking presented here follows current literature and restricts $M$ to be an isometric mapping \cite{Modi2018,Ding2020,Liang2019,Li2018}.  Of course, one could also consider maskers $\mc{M}$ that are more general completely-positive trace-preserving (CPTP) maps.  However, this more general model of masking allows for the potential of eavesdroppers and dishonest parties.  For example, the full set of qubit states gets masked under the mapping $\op{\psi}{\psi}\mapsto \mc{M}(\op{\psi}{\psi}) = \frac{1}{4} \sum_{m,n} ( X^m Z^n \op{\psi}{\psi} Z^n X^m )_A \otimes \op{mn}{mn}_B$, where $X$ and $Z$ are Pauli operators; one can directly verify that both $\tr_A\mc{M}(\op{\psi}{\psi})$ and $\tr_B\mc{M}(\op{\psi}{\psi})$ are completely mixed for every $\ket{\psi}$.  Yet, in a purified picture, up to a basis change on $E$ the map $\mc{M}$ arises from some dilation $U_{Q\to ABE}=\sum_{m,n}X^mZ^n_ \otimes \ket{mn}_B\otimes \ket{mn}_E$ such that $\mc{M}(\rho)=\tr_{E}(U\rho U^\dagger)$.  If Alice were to have access to system $E$, then she could correct the Pauli $Z^nX^m$ on her system and perfectly obtain the state $\ket{\psi}$.  One might then be motivated to demand the strongest form of masking and require that Alice has no information about the input state even if she had access to any additional side information outside of Bob's system, meaning that the output of the masker should be a bipartite pure state.  Such is the attitude taken in traditional studies of quantum key distribution (QKD) \cite{Lo-1999a}, and it is one we likewise adopt here.  Thus, throughout this work we will assume that all maskers are isometric in form.

The task of bipartite state masking has been completely characterized in the qubit setting \cite{Liang2019,Ding2020}, as initially conjectured in Ref. \cite{Modi2018}. Specifically, a set of maskable qubit states corresponds to a two-dimensional hyperdisk on the Bloch sphere \cite{Ding2020}.  A similar geometric structure is suggested to persist in higher dimensions, as evidenced in the case of qutrits \cite{Ding2020}, though a complete characterization of d-dimensional masking remains to be studied. While universal bipartite masking of arbitrary states is impossible \cite{Modi2018}, it has been shown that universal multipartite masking is possible: all states in $\mbb{C}^d$ may be masked by a multipartite masker with the addition of $2d-1$ systems of dimension $d$ \cite{Li2018}.

We now extend the concept of state masking to quantum channels. A quantum channel $\mc{E}$ is defined as a CPTP map from input system $Q$ to output system $Q'$, and the collection of all such maps we denote as $\cptp(Q\to Q')$. 
We say that a set of channels $\mc{S} = \{\mc{E}_\lambda\}_\lambda \subset \cptp(Q \to Q')$ is maskable if there exists an isometry $M : \mc{H}_{Q'} \to \mc{H}_{AB}$ such that 
    \begin{align}
    \label{Eq:Channel-masking-defn}
        \alpha(\cdot) &= \tr_B[M \mc{E_\lambda(\cdot)} M^\dag],
        &\beta(\cdot) &= \tr_A[M \mc{E_\lambda(\cdot)} M^\dag],
    \end{align}
where $\alpha(\cdot)$ and $\beta(\cdot)$ are fixed channels independent of $\lambda$. 
The latter conditions mean that the identity of the applied channel $\mc{E}_\lambda$ is hidden from each subsystem, while still being globally recoverable via the isometry $M$ (see Fig. \ref{Fig:channel_masking}). 
A special instance of this problem involves families of unitary gates $\{U_\lambda\}_\lambda$, and we say that a set of gates can be masked if the corresponding set of channels $\{\mc{U}_{\lambda}\}_\lambda$ can be masked, where $\mc{U}_{\lambda}(\cdot):=U_\lambda(\cdot)U_\lambda^\dagger$. 

A useful observation is that applying pre- and post- unitaries to a family of channels does not alter its ability to be masked.  That is, the channels $\{\mc{E}_\lambda\}_\lambda$ can be masked iff $\{\mc{U}_{\text{post}}\circ\mc{E}_\lambda\circ\mc{U}_{\text{pre}}\}_\lambda$ can be masked for any choice of unitaries $U_{\text{pre}}$ and $U_{\text{post}}$ independent of $\lambda$, a fact that can be directly verified from the definition in Eq. \eqref{Eq:Channel-masking-defn}.  These unitary degrees of freedom will be used heavily when studying the masking of qubit channels below.

\medskip

\section{Results}

\subsection{Masking families of unitaries}

We now present our findings on the problem of channel masking.  We begin by restricting our attention to the masking of unitary gates $\{U_\lambda\}_{\lambda}$ that act on some $d$-dimensional quantum system $\mc{H}_Q\cong\mbb{C}^d$.  The following structural lemma pertains to the masking of a single unitary $U$ and the identity $\mbb{1}$.\\



\begin{proposition}
\label{Prop:identity-unitary-mask}
    Suppose that $M:\mc{H}_{Q}\to\mc{H}_{AB}$ is a masker for the two unitaries $\{\mathbb{1},U\}$.  Let $\ket{e_1}$ and $\ket{e_2}$ be any two eigenstates of $U$ belonging to distinct eigenspaces.  Then $M$ must map $\ket{e_1}$ and $\ket{e_2}$ to locally orthogonal states.  In other words,
    \begin{align}
        \tr_X (M\op{e_1}{e_1}M^\dagger) \perp \tr_X (M\op{e_2}{e_2}M^\dagger) 
    \end{align}
    for $X\in\{A,B\}$, where $\perp$ denotes operators with orthogonal supports.
\end{proposition}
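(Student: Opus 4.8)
The plan is to extract local orthogonality from the single scalar constraint that the masking condition places on the coherence between $\ket{e_1}$ and $\ket{e_2}$. Write $\mu_1,\mu_2$ for the distinct eigenvalues, $U\ket{e_j}=\mu_j\ket{e_j}$, and set $\ket{E_j}:=M\ket{e_j}$. Applying the masking definition to the set $\{\mbb{1},U\}$, the two reduced channels must coincide for both inputs, so for $X\in\{A,B\}$ and every density operator $\rho$,
\begin{align}
\tr_X[M\rho M^\dagger] = \tr_X[M U\rho U^\dagger M^\dagger].
\end{align}
Both sides are complex-linear in $\rho$, and density operators span the full operator space (the rank-one projectors $\op{\psi}{\psi}$ real-span the Hermitian operators, whose complexification is everything), so this identity extends to all operators $\rho$, in particular to the non-Hermitian coherence operator $\rho=\op{e_1}{e_2}$.

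First I would substitute $\rho=\op{e_1}{e_2}$. On the left, $M\op{e_1}{e_2}M^\dagger=\op{E_1}{E_2}$; on the right, since $\ket{e_1},\ket{e_2}$ are eigenvectors, $U\op{e_1}{e_2}U^\dagger=\mu_1\overline{\mu_2}\,\op{e_1}{e_2}$. The equality then collapses to a single eigenvalue relation,
\begin{align}
\tr_X\op{E_1}{E_2} = \mu_1\overline{\mu_2}\,\tr_X\op{E_1}{E_2}, \qquad X\in\{A,B\},
\end{align}
i.e. $(1-\mu_1\overline{\mu_2})\,\tr_X\op{E_1}{E_2}=0$. Because $\mu_1,\mu_2$ lie on the unit circle and are distinct, $\mu_1\overline{\mu_2}=1$ would force $\mu_1=\mu_2$, so $\mu_1\overline{\mu_2}\neq 1$ and the prefactor is nonzero. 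Hence both partial-trace cross terms vanish: $\tr_A\op{E_1}{E_2}=0$ and $\tr_B\op{E_1}{E_2}=0$.

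The remaining step, which I expect to be the main (though routine) obstacle, is the linear-algebra translation from vanishing cross terms to orthogonal marginal supports, and one must keep track of which trace controls which marginal. Expanding $\ket{E_j}$ in a product basis as a coefficient matrix $C_j$, one computes $\tr_B\op{E_i}{E_j}=C_iC_j^\dagger$ and $\tr_A\op{E_i}{E_j}=\overline{C_i^\dagger C_j}$, with marginals $\rho^A_j=C_jC_j^\dagger$ and $\rho^B_j=\overline{C_j^\dagger C_j}$. Using $\mathrm{supp}(C_jC_j^\dagger)=\mathrm{range}(C_j)$, the relation $C_1^\dagger C_2=0$ (from $\tr_A\op{E_1}{E_2}=0$) forces $\mathrm{range}(C_1)\perp\mathrm{range}(C_2)$, i.e. $\rho^A_1\perp\rho^A_2$; symmetrically, $C_1C_2^\dagger=0$ (from $\tr_B\op{E_1}{E_2}=0$) gives $\mathrm{range}(C_2^\dagger)\subseteq\ker(C_1)=\mathrm{range}(C_1^\dagger)^\perp$, hence $\rho^B_1\perp\rho^B_2$. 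Thus the two cross-term equations deliver precisely the claimed local orthogonality on both $A$ and $B$. I would present this last fact through the coefficient-matrix identities above, or equivalently by noting that $\mathrm{supp}(\rho^A_j)$ is the span of the partial inner products $\{(\mbb{1}_A\otimes\bra{b}_B)\ket{E_j}\}_b$ and checking these spans are orthogonal; the matrix form is the most transparent, and it is the bookkeeping of which trace yields which marginal that requires the only genuine care.
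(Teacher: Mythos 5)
Your proposal is correct and follows essentially the same route as the paper: both arguments first show that the cross terms $\tr_X\op{E_1}{E_2}$ must vanish (using that the two eigenvalues are distinct points on the unit circle, so $1-\mu_1\overline{\mu_2}\neq 0$), and then translate these vanishing cross terms into orthogonality of the marginal supports via the same coefficient-matrix/Schmidt-type linear algebra. The only difference is cosmetic: you extract the cross-term condition by extending the masking identity to non-Hermitian operators by linearity and substituting $\op{e_1}{e_2}$ directly, whereas the paper varies the relative phase in an explicit superposition $\cos\theta\ket{e_1}+\sin\theta e^{i\phi}\ket{e_2}$ --- two equivalent ways of isolating the same coherence term.
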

\noindent In the language of \cite{George2023}, the masker $M$ must ``broadcast the orthogonality'' of the eigenstates $\ket{e_1}$ and $\ket{e_2}$.
\begin{proof}
    Consider an arbitrary superposition $\ket{\psi}=\cos\theta\ket{e_1}+\sin\theta e^{i\phi}\ket{e_2}$ such that $U\ket{\psi}=\cos\theta \lambda_1\ket{e_1}+\sin\theta e^{i\phi}\lambda_2\ket{e_2}$, where $\lambda_1$ and $\lambda_2$ are the distinct eigenvalues of $U$ for eigenstates $\ket{e_1}$ and $\ket{e_2}$. Write $\ket{E_i}_{AB}=M\ket{e_i}$.  Then since $|\lambda_1|^2=|\lambda_2|^2=1$, we have
\begin{align}
    \tr_X (M&\op{\psi}{\psi} M^\dagger)\notag\\
    &=\cos^2\theta \tr_X\op{E_1}{E_1}+\sin^2\theta \tr_X\op{E_2}{E_2}\notag\\
    &+\cos\theta\sin\theta(e^{-i\phi}\tr_X\op{E_1}{E_2}+\text{h.c.})\notag\\
    \tr_X(M&U\op{\psi}{\psi}U^\dagger M^\dagger) \notag\\
    &= \cos^2\theta \tr_X\op{E_1}{E_1}+\sin^2\theta \tr_X\op{E_2}{E_2}\notag\\
    &+\cos\theta\sin\theta(e^{-i\phi}\lambda_1\lambda_2^* \tr_X\op{E_1}{E_2}+\text{h.c.})\notag.
\end{align}
    Since $M$ is a masker, the partial trace of both density matrices must be equal, which means that
    \begin{align}
        e^{-i\phi}(1-\lambda_1\lambda_2^*)\tr_X\op{E_1}{E_2}+\text{h.c.}=0.\notag
    \end{align}
Note that $(1-\lambda_1\lambda_2^*)\not=0$ since $\lambda_1\not=\lambda_2$.  For this to hold for any choice of $\phi$, we must have that 
\begin{align}
\label{Eq:partial-orthogonal-conds}
    \tr_X\op{E_1}{E_2}=\tr_X\op{E_2}{E_1}=0.
\end{align}
Let $\ket{E_1}=\sum_{i=1}^d\ket{i}_A\ket{\varphi_i}_B$ and $\ket{E_2}=\sum_{i=1}^d\ket{i}_A\ket{\varphi'_i}_B$ so that
\begin{align}       
0=\tr_B\op{E_1}{E_2}=\sum_{i,j=1}^d\op{i}{j}\ip{\varphi'_j}{\varphi_i},\notag
\end{align}
which requires that $\ip{\varphi_j'}{\varphi_i}=0$ for all $i,j$.  Therefore,
\begin{align}
\tr_A\op{E_1}{E_1}\tr_A\op{E_2}{E_2}&=\!\!\sum_{i,j=1}^d\ket{\varphi_i}\bra{\varphi_i}\ket{\varphi'_j}\bra{\varphi'_j}_B=0\notag.
\end{align}
This establishes the orthogonality of $\tr_A (M\op{e_1}{e_1}M^\dagger)$ and $\tr_A (M\op{e_2}{e_2}M^\dagger)$.  A similar argument shows orthogonality for Alice's reduced states.  It is interesting to note that Eq. \eqref{Eq:partial-orthogonal-conds}, the key step in this proof, also appears in the no-go state masking proof of Ref. \cite{Modi2018}. 
\end{proof}

We now use Proposition \ref{Prop:identity-unitary-mask} to prove our first main result, which is a full characterization of maskable gate sets.

\begin{theorem}
    \label{Thm:masking-gates}
    A set of $N$ unitary gates $\{U_n\}_{n=1}^N$ on $\mbb{C}^d$ can be masked iff $\{U_1^\dagger U_n\}_{n=2}^N$ forms a pairwise commuting set.
\end{theorem}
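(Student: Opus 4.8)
The plan is to prove both directions after a normalizing reduction. Using the pre-/post-unitary invariance noted above (post-composing with $\mc{U}_{U_1^\dagger}$), the family $\{U_n\}_{n=1}^N$ is maskable iff $\{V_n\}_{n=1}^N$ is, where $V_n := U_1^\dagger U_n$. Since $V_1=\mbb{1}$ and $\{U_1^\dagger U_n\}_{n=2}^N=\{V_n\}_{n=2}^N$, it suffices to show that $\{\mbb{1},V_2,\dots,V_N\}$ is maskable iff $V_2,\dots,V_N$ pairwise commute. For the ``if'' direction, the $V_n$ are pairwise commuting unitaries, hence simultaneously diagonalizable in a common orthonormal eigenbasis $\{\ket{i}\}_{i=1}^d$ (in which $\mbb{1}$ is trivially diagonal). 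I would then exhibit the explicit ``copying'' isometry $M\ket{i}=\ket{i}_A\ket{i}_B$ and check by direct computation that $\tr_B[M V_n\op{\psi}{\psi}V_n^\dagger M^\dagger]=\sum_i\langle i|\psi\rangle\langle\psi|i\rangle\,\op{i}{i}_A$ is the fully dephasing channel in the $\{\ket{i}\}$ basis, manifestly independent of $n$ (and symmetrically for $\tr_A$), so $M$ is a valid masker.

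The substantive content is the ``only if'' direction, which leans on Proposition~\ref{Prop:identity-unitary-mask}. If $M$ masks the whole family, then it masks each pair $\{\mbb{1},V_n\}$, so the proposition applies to every $V_n$. First I would lift the proposition to eigenspaces: writing $V_n=\sum_j\mu_j^{(n)}P_j^{(n)}$ with orthogonal eigenprojectors $P_j^{(n)}$, the local-orthogonality conclusion (taken over all pairs of vectors in distinct eigenspaces) produces mutually orthogonal subspaces $S_A^{(n,j)}\subseteq\mc{H}_A$ and $S_B^{(n,j)}\subseteq\mc{H}_B$ with $M(\mathrm{ran}\,P_j^{(n)})\subseteq S_A^{(n,j)}\otimes S_B^{(n,j)}$, using that a pure state lies in the tensor product of its marginal supports.

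Now fix a second index $m$ and two eigenstates $\ket{f_1},\ket{f_2}$ of $V_m$ from distinct eigenspaces, and expand each into its $V_n$-components $\ket{f_a^{(j)}}:=P_j^{(n)}\ket{f_a}$. Because the $S_B^{(n,j)}$ are mutually orthogonal, the partial trace annihilates every cross-block term, so $\tr_B[M\op{f_a}{f_a}M^\dagger]=\sum_j\tr_B[M\op{f_a^{(j)}}{f_a^{(j)}}M^\dagger]$ is block diagonal across the $S_A^{(n,j)}$. Applying Proposition~\ref{Prop:identity-unitary-mask} to $\{\mbb{1},V_m\}$ forces $\tr_B[M\op{f_1}{f_1}M^\dagger]\perp\tr_B[M\op{f_2}{f_2}M^\dagger]$, and comparing this block by block gives that within each $S_A^{(n,j)}$ the marginals of $M\ket{f_1^{(j)}}$ and $M\ket{f_2^{(j)}}$ are orthogonally supported. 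Orthogonal $A$-supports force the global vectors to be orthogonal, and since $M$ is isometric this yields $\langle f_1^{(j)}|f_2^{(j)}\rangle=\bra{f_1}P_j^{(n)}\ket{f_2}=0$ for every $j$.

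Letting $\ket{f_1},\ket{f_2}$ range over distinct eigenspaces of $V_m$, this says each $P_j^{(n)}$ is block diagonal with respect to the eigenprojectors of $V_m$, i.e. $[P_j^{(n)},P_k^{(m)}]=0$ for all $j,k$, whence $V_n$ and $V_m$ commute; as $m,n$ were arbitrary, the set pairwise commutes. The main obstacle I anticipate is exactly this step under degeneracy: when eigenvalues repeat, one controls only eigenspaces rather than individual eigenvectors, so the clean ``$M$ broadcasts two orthonormal bases, hence they agree up to a permutation'' argument (which settles the nondegenerate case directly) must be replaced by the block-diagonalization bookkeeping above, and care is needed to verify both that the cross-block partial traces truly vanish and that blockwise local orthogonality genuinely promotes to $\langle f_1^{(j)}|f_2^{(j)}\rangle=0$.
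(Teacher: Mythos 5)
Your proof is correct, and while the ``if'' direction coincides exactly with the paper's (the copying isometry $M\ket{i}=\ket{ii}$ in a common eigenbasis, yielding a dephasing channel independent of $n$), your ``only if'' direction takes a genuinely different route. The paper argues by contradiction and pointwise: assuming $[U,V]\neq 0$, it picks eigenvectors $\ket{e},\ket{e'}$ of $U$ in distinct eigenspaces with $\bra{e'}V\ket{e}\neq 0$, then invokes the uniqueness of purifications --- since $M$ masks $\{\mbb{1},V\}$, the states $MV\ket{e}$ and $M\ket{e}$ have identical marginals and hence satisfy $MV\ket{e}=(\mbb{1}\otimes T)M\ket{e}$ for some unitary $T$ --- and finally uses Proposition \ref{Prop:identity-unitary-mask} applied to $\{\mbb{1},U\}$ to force $\bra{E'}(\mbb{1}\otimes T)\ket{E}=0$, a contradiction. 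You instead never invoke purification uniqueness: you lift Proposition \ref{Prop:identity-unitary-mask} to a global statement that $M$ sends each eigenspace of $V_n$ into a tensor product of mutually orthogonal local blocks $S_A^{(n,j)}\otimes S_B^{(n,j)}$, use the vanishing of cross-block partial traces to compare marginals block by block, and conclude $\bra{f_1}P_j^{(n)}\ket{f_2}=0$ for eigenstates of $V_m$ in distinct eigenspaces, i.e.\ $[P_j^{(n)},P_k^{(m)}]=0$ for all spectral projectors, hence $[V_n,V_m]=0$. The trade-off: the paper's argument is shorter and hides the degeneracy bookkeeping inside the purification step, while yours is constructive rather than by contradiction and delivers a stronger structural picture --- the masker simultaneously block-diagonalizes all the $V_n$'s spectral projectors --- at the cost of the support-orthogonality bookkeeping you flagged (all of which checks out: cross-block traces do vanish because the $S_B^{(n,j)}$ are orthogonal, and blockwise orthogonal $A$-supports do force $\ip{f_1^{(j)}}{f_2^{(j)}}=0$ via isometry of $M$).
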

\begin{remark}
    Multiplying by $U_1^\dagger$ in this theorem is arbitrary since the set $\{U_1^\dagger U_n\}_{n=2}^N$ is pairwise commuting iff $\{U_k^\dagger U_n\}_{n=2}^N$ is pairwise commuting for any other $U_k$ in the set.  Indeed, the commuting relations $[U_1^\dagger U_m,U_1^\dagger U_k]=0$ for any $m,k$ implies that
    $U_k^\dagger U_m=U_1^\dagger U_m U_k^\dagger U_1$.  Inserting the identity $\mathbb{1} = U_1 U_1^\dag$ gives $U_k^\dagger U_1 U^\dagger_1 U_m=U_1^\dagger U_m U_k^\dagger U_1$, which says that $[U_k^\dag U_1, U_1^\dag U_m]=0$.  Then from the identity, 
\begin{align}
    \label{rmk1}
   & U_k^\dag U_m U_k^\dag U_n = (U_k^\dag U_1) (U_1^\dag U_m) (U_k^\dag U_1)(U_1^\dag U_n),
\end{align}
we use the commutator $[U_k^\dag U_1, U_1^\dag U_m]=0$ to freely swap the order as follows,
\begin{align*}
    (U_k^\dag U_1) (U_1^\dag U_m) &(U_k^\dag U_1)(U_1^\dag U_n)\\
   &= (U_k^\dag U_1) (U_1^\dag U_n) (U_k^\dag U_1)(U_1^\dag U_m)\\
   &= U_k^\dag U_n U_k^\dag U_m.
\end{align*}
Hence $[U_k^\dag U_m, U_k^\dag U_n]=0$.

\end{remark}

\noindent We now turn to the proof of Theorem \ref{Thm:masking-gates}.
\begin{proof}
    Let $\{U_n\}_{n=1}^N$ be an arbitrary set of unitaries on $\mbb{C}^d$.  By the observation made after Eq. \eqref{Eq:Channel-masking-defn} above, this is maskable iff the set $\{\mbb{1},W_n\}_{n=2}^N$ is also maskable, where $W_n:=U_1^\dagger U_n$.  Therefore, the proof of Theorem \ref{Thm:masking-gates} amounts to showing that the set $\{\mbb{1},W_n\}_{n=2}^N$ can be masked iff the $W_n$ are pairwise commuting.

($\Leftarrow$) Let $\{\mbb{1},W_n\}_{n=2}^N$ be a full set of pairwise commuting elements, and let $\{\ket{f_k}\}_{k=1}^d$ be a common orthonormal eigenbasis.  Then consider a bipartite masker $M:\mbb{C}^d\to \mbb{C}^d\otimes\mbb{C}^d$ of the form $M = \sum_{k=1}^{d} \ket{kk}\bra{f_k}$. An arbitrary state may be written in the eigenbasis as $\ket{\psi}=\sum_{k=1}^d \alpha_k\ket{f_k}\in\mbb{C}^d$.  Any unitary in $\{\mbb{1},W_n\}_{n=2}^N$ will act as $\ket{\psi}\mapsto \ket{\psi'}=\sum_{k=1}^d \alpha_ke^{i\phi_k}\ket{f_k}$ for some phases $\phi_k$.  The masker then transforms
\[M\ket{\psi'}=\sum_{k=1}^d\alpha_k e^{i\phi_k}\ket{kk},\]
whose reduced states are completely independent of the phases $\phi_k$ due to the orthonormality of the $\ket{k}$. The action of the arbitrary unitary in $\{\mbb{1},W_n\}_{n=2}^N$ has therefore been masked.

\noindent ($\Rightarrow$)  Suppose that a masker $M$ exists for the set $\{\mbb{1},W_n\}_{n=2}^N$.  We want to show that this is a pairwise commuting set.  An arbitrary pair $U,V\in\{W_n\}_{n=2}^N$ will commute with each other iff for every eigenvector $\ket{e}$ of $U$, the state $V\ket{e}$ is also an eigenvector of $U$ with the same eigenvalue as $\ket{e}$.  Now suppose on the contrary that $[U,V]\not=0$.  Then $U$ has some eigenvector $\ket{e}$ such that $V\ket{e}$ does not belong to the same eigenspace as $\ket{e}$.  This means there exists some other eigenvector $\ket{e'}$ of $U$ in a different eigenspace than $\ket{e}$ such that 
\begin{align}
    0\not=\bra{e'}V\ket{e}=\bra{e'}M^\dagger M V\ket{e}.
\end{align}
Since $M$ is a masker for the set $\{\mbb{1},V\}$, the bipartite states $MV\ket{e}$ and $M\ket{e}=:\ket{E}$ must purify the same reduced density matrices.  Hence, there exists a unitary $T$ such that $MV\ket{e}=\mbb{1}\otimes T\ket{E}$.  Denoting $\ket{E'}=M\ket{e'}$, the previous equation can be written as
\begin{align}
    0\not=\bra{E'}\mbb{1}\otimes T\ket{E}.
\end{align}
However, since $M$ also masks $\{\mathbb{1},U\}$, Proposition \ref{Prop:identity-unitary-mask} requires that $\tr_B\op{E'}{E'}\perp\tr_B\op{E}{E}$, which would imply that $\bra{E'}\mbb{1}\otimes T\ket{E}=0$.  From this contradiction, we conclude that $[U,V]=0$, which means the set  $\{\mbb{1},W_n\}_{n=2}^N$ is pairwise commuting.

\end{proof}

We note that our masker $M = \sum_{k=1}^d \op{kk}{f_k}$ acts by ``copying'' the basis index, i.e., $M\ket{f_k} = \ket{kk}$. Off-diagonal terms $\op{kk}{jj}$ vanish under partial trace for $k \neq j$ with this embedding. It is clear to see the off-diagonal correlations are thus globally accessible, but entirely hidden from the reduced channels.



\begin{remark}
    Recall that every qubit pure state can be represented by a unit vector $\hat{n}$ on the surface of the unit sphere in $\mbb{R}^3$, and unitary gates $U\in SU(2)$ correspond to three-dimensional rotations $O\in SO(3)$. For qubit systems, an alternative and illustrative proof for the converse of Theorem \ref{Thm:masking-gates} can be given in terms of this Bloch sphere geometry.  Consider two states $\ket{r}, \ket{r'}$ that lie on the disk in $\mbb{R}^3$ perpendicular to the Bloch sphere axis of rotation $\hat{n}_r$ of $W_1$, and define $W_1 \ket{r} = \ket{r'}$. Denote the plane containing this disk by $\mc{P}$. Let $W_2$ map $\ket{r}, \ket{r'}$ to states $\ket{s}, \ket{s'}$ so that we have
\begin{align}
    W_1 \ket{r} &= \ket{r'},    &W_2 \ket{r} &= \ket{s},
    &W_2 \ket{r'} &= \ket{s'}.
\end{align}
If there exists a masker $M$ for $\{\mathbb{1},W_1,W_2\}$, then it must mask the set of states $\{\ket{r},\ket{r'},\ket{s}\}$, corresponding to the set of unitaries $\{\mathbb{1},W_1,W_2\}$ acting on $\ket{r}$.  Thus, $\tr_B[M\op{\psi}{\psi}M^\dag]$ must be constant for all $\ket{\psi} \in \{\ket{r},\ket{r'},\ket{s}\}$.  On the other hand, $M$ must also mask the set of states $\{\ket{r'},\ket{s'}\}$, corresponding to $\{\mathbb{1},W_2\}$ acting on $\ket{r'}$, which means that $\tr_B[M\op{\phi}{\phi}M^\dag]$ must also be constant for all $\ket{\phi} \in \{\ket{r'},\ket{s'}\}$.  Since $\ket{r'}$ is common to both sets $\{\ket{r},\ket{r'},\ket{s}\}$ and $\{\ket{r'},\ket{s'}\}$, the reduced density matrices must be constant when $M$ acts on all the $\{\ket{r},\ket{r'},\ket{s}, \ket{s'}\}$, i.e. $M$ masks this entire set of states.  However, it is known that this is possible iff the states lie on a single hyperdisk on the Bloch sphere \cite{Ding2020}, thus the set $\{\ket{r}, \ket{r'}, \ket{s}, \ket{s'}$ must all lie within the plane $\mc{P}$. By definition, $W_1$ is a rotation about the axis $\hat{n}_r$ perpendicular to $\mc{P}$. By Eq. (9), $W_2$ maps an arbitrary state $\ket{v} = a \ket{r} + b \ket{r'} \in \mc{P}$ to state $W_2 \ket{v} = a \ket{s} + b \ket{s'} \in \mc{P}$. Then we have $W_2 (\mc{P}) = \mc{P}$, i.e. the plane $\mc{P}$ is preserved under the rotation $W_2$. Therefore, the axis of rotation of $W_2$ must be $\hat{n}_r$, and so $W_1$ and $W_2$ share an axis of rotation and must commute. Remark 2 relies on the hyperdisk characterization of maskable qubit states \cite{Liang2019,Ding2020}. Maskable qudit sets for $d>2$, however, may extend beyond a single hyperdisk \cite{Ding2020}. Thus the hyperdisk argument does not generalize to higher dimensional systems, motivating our use of Theorem 1 for arbitrary dimensions.
\end{remark}

\begin{example}
    Consider the set of unitary operators composed of Pauli-X and Z gates $\{X,XZ,X \sqrt{Z}\}$. Note that this is not a pairwise commuting set, however multiplying by $X^\dag$ we consider the transformed set $\{\mathbb{1}, Z, \sqrt{Z}\}$. Now fully pairwise commuting, this set is maskable by Theorem 1, and thus the original set $\{X,XZ,X \sqrt{Z}\}$ is maskable. A masker $M$ may then be constructed by the common  eigenvectors of $\{\mathbb{1}, Z, \sqrt{Z}\}$, where the masker for the original isometry is defined by $M X^\dag$.  
\end{example}

\subsection{Masking noisy qubit channels}

We now move beyond the masking of gates and consider noisy channels.  Perhaps the simplest generalization involves mixing unitary gates with depolarizing noise.  These are families of channels $\{\mc{E}_{U_\lambda}^{(p)}\}_{\lambda}$ where
\begin{align}
    \mc{E}_{U_\lambda}^{(p)}(\cdot) &= p U_\lambda (\cdot) U_\lambda^\dag + (1-p) \frac{\mathbb{1}}{d}
\end{align}
with fixed $p$ and varying $U_\lambda$. One sees that  the $\{\mc{E}_{U_\lambda}^{(p)}\}_{\lambda}$ can be masked by masker $M$ iff for $X\in\{A,B\}$
\begin{align}
   p \tr_X[MU_\lambda(\cdot)U_\lambda^\dag M^\dag] &+ \frac{1}{d}(1-p) \tr_X[M\mathbb{1}M^\dag]
\end{align}
is a constant channel for all $U_\lambda$.  This holds iff the first term is independent of $\lambda$, and we therefore see that $\{\mc{E}_{U_\lambda}^{(p)}\}_{\lambda}$ can be masked for any $p>0$ iff the corresponding gates $\{U_\lambda\}_{\lambda}$ can be masked, for which we turn to Theorem \ref{Thm:masking-gates} to decide.  To investigate more interesting classes of channels, we restrict our attention to qubit systems.

\subsubsection{Pauli channels}

One of the most important types of qubit channels are the Pauli channels, which have the form
\begin{align}
    \mc{E}_{\vec{p}}(\cdot) = \sum_i p_i \sigma_i (\cdot) \sigma_i^\dag
\end{align}
for $i = \{0,x,y,z\}$ and probability four-vector $\vec{p}=(p_0,p_x,p_y,p_z)$. We define $\mathfrak{M}$ as an arbitrary set of probability four-vectors. Masking of channels $\{\mc{E}_{\vec{p}}\}_{\mathfrak{M}}$ for arbitrary input state $\rho$ requires that $\tr_X(M \mc{E}_{\vec{p}}(\rho) M^\dag)$ is fixed for all $\vec{p}\in\mathfrak{M}$, with $X,Y \in \{A,B\}$. 
\begin{theorem}
    Let $\mathfrak{M}$ denote an arbitrary set of probability four-vectors.  A family of Pauli channels $\{\mc{E}_{\vec{p}}\}_{\vec{p}\in \mathfrak{M}}$ can be masked iff there exists some $k\in\{x,y,z\}$ and constant $c\in[0,1]$ such that $p_{0} + p_{k}=c$ for all $\vec{p}\in\mathfrak{M}$.
    \label{Thm:masking-Paulis}
\end{theorem}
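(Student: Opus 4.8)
The plan is to work entirely in the Bloch representation, where a Pauli channel acts diagonally. Writing an input qubit as $\rho = \tfrac12(\mathbb{1} + \sum_{j} r_j \sigma_j)$ for $j \in \{x,y,z\}$, a direct computation gives $\mc{E}_{\vec p}(\rho) = \tfrac12(\mathbb{1} + \sum_j \eta_j(\vec p)\, r_j\, \sigma_j)$ with axis contraction factors $\eta_x = p_0 + p_x - p_y - p_z$, $\eta_y = p_0 - p_x + p_y - p_z$, and $\eta_z = p_0 - p_x - p_y + p_z$. Using $p_0 + p_x + p_y + p_z = 1$ one finds $\eta_k = 2(p_0 + p_k) - 1$ for each $k$, so the condition ``$p_0 + p_k = c$ for all $\vec p \in \mathfrak{M}$'' is exactly equivalent to ``$\eta_k$ is constant on $\mathfrak{M}$.'' The theorem therefore reduces to the statement: the family is maskable iff at least one of the three Bloch contraction factors is constant across $\mathfrak{M}$.

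Next I would linearize the masking condition. For any isometry $M$, define $L_X^{(j)} := \tr_{\bar X}[M \sigma_j M^\dagger]$ for $j \in \{0,x,y,z\}$ and $X \in \{A,B\}$, with $\bar X$ the complementary system. By linearity of $M$, the $X$-marginal of $M \mc{E}_{\vec p}(\rho) M^\dagger$ equals $\tfrac12 L_X^{(0)} + \tfrac12 \sum_{j} \eta_j(\vec p)\, r_j\, L_X^{(j)}$. Masking demands this be independent of $\vec p$ for every input $\vec r$ and both $X$; substituting the pure input $\vec r = \hat e_j$ isolates each axis and shows $\eta_j(\vec p)\, L_X^{(j)}$ must be constant on $\mathfrak{M}$ for every $j$ and every $X$. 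Consequently, whenever $\eta_j$ is non-constant we are forced to have $L_A^{(j)} = L_B^{(j)} = 0$.

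For necessity ($\Rightarrow$) I argue by contradiction: suppose a masker exists but none of $\eta_x, \eta_y, \eta_z$ is constant. Then $L_A^{(j)} = L_B^{(j)} = 0$ for all $j \in \{x,y,z\}$, so for every pure input the marginals $\tr_B[M\op\psi\psi M^\dagger] = \tfrac12 L_A^{(0)}$ and $\tr_A[M\op\psi\psi M^\dagger] = \tfrac12 L_B^{(0)}$ are completely independent of $\ket\psi$. This makes $M$ a universal masker of all pure qubit states, which is forbidden by the no-go theorem of Ref.~\cite{Modi2018} (equivalently, maskable qubit sets lie on a single Bloch disk \cite{Ding2020} rather than filling the sphere). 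Hence at least one $\eta_k$, and so some $p_0 + p_k$, is constant. I expect this contradiction to be the conceptual crux, since it is precisely where the one-dimensional geometry of maskable sets enters to forbid pinning down all three axes simultaneously.

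For sufficiency ($\Leftarrow$), suppose $\eta_k$ is constant, let $\{\ket{e_0},\ket{e_1}\}$ be the eigenbasis of $\sigma_k$, and define the isometry by $M\ket{e_0} = \ket{00}$, $M\ket{e_1} = \ket{11}$. For each of the two axes $j \ne k$, $\sigma_j$ anticommutes with $\sigma_k$ and is thus purely off-diagonal in this basis, giving $M\sigma_j M^\dagger = \alpha\op{00}{11} + \bar\alpha\op{11}{00}$ with $|\alpha|=1$; both partial traces of this operator vanish, so $L_A^{(j)} = L_B^{(j)} = 0$ for $j \ne k$. Only the $k$-term survives, and since $\eta_k$ is constant the marginals $\tfrac12 L_X^{(0)} + \tfrac12 \eta_k\, r_k\, L_X^{(k)}$ depend on the input alone and not on $\vec p$, which is exactly the masking condition. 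A final verification that these marginals define legitimate reduced channels (they do, being dephasing-type maps built from $\op00 - \op11$) completes the construction.
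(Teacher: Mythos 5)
Your proof is correct and follows essentially the same route as the paper's: your transfer operators $L_X^{(j)}$ evaluated on the inputs $\hat{e}_z,\hat{e}_x,\hat{e}_y$ reproduce exactly the paper's computations on $\op{0}{0}$, $\op{+}{+}$, $\op{\widetilde{+}}{\widetilde{+}}$; the forced vanishing $L_A^{(j)}=L_B^{(j)}=0$ for each non-constant axis is precisely the paper's chain of conditions $\tr_X\op{\Psi_0}{\Psi_0}=\tr_X\op{\Psi_1}{\Psi_1}$ and $\tr_X\op{\Psi_0}{\Psi_1}=\mp\tr_X\op{\Psi_1}{\Psi_0}$; and your sufficiency masker $M\ket{e_0}=\ket{00}$, $M\ket{e_1}=\ket{11}$ is the paper's $M=\op{00}{+}+\op{11}{-}$ written in the $\sigma_k$ eigenbasis. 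The only cosmetic difference is the final contradiction in necessity, where you cite the universal-masking no-go theorem while the paper asserts the equivalent fact that the marginals cannot simultaneously be equal and satisfy $\tr_X\op{\Psi_0}{\Psi_1}=0$.
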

\begin{proof}
($\Leftarrow$)  Suppose without loss of generality that $k=x$ so that $p_0+p_x=c$.  Define the masker $M:\mbb{C}^2\to\mbb{C}^2\otimes\mbb{C}^2$ by $M=\op{00}{+}+\op{11}{-}$, where $\ket{\pm}$ are the $\pm 1$ eigenstates of $\sigma_x$.  Then a straightforward calculation shows that for any input state $\rho$ we have
\begin{align}
    \tr_X[M&\mc{E}_{\vec{p}}(\rho)M^\dagger]\notag\\
    =&(p_0+p_x)(\bra{+}\rho\ket{+}\op{0}{0}+\bra{-}\rho\ket{-}\op{1}{1})\notag\\
    +&(p_y+p_z)(\bra{-}\rho\ket{-}\op{0}{0}+\bra{+}\rho\ket{+}\op{1}{1}),\notag
\end{align}
which is the same for all $\vec{p}\in\mathfrak{M}$.

\noindent ($\Rightarrow$)  Suppose that a masker $M$ exists for the channels $\{\mc{E}_{\vec{p}}\}_{\vec{p}\in \mathfrak{M}}$.  Let us denote action of $M$ on the computational basis as $M\ket{0}=\ket{\Psi_0}_{AB}$ and $M\ket{1}=\ket{\Psi_1}_{AB}$.  Then
\begin{align}
    &\tr_X[M\mc{E}_{\vec{p}}(\op{0}{0}) M^\dagger]\notag\\
    &=(p_0+p_z)\tr_{X}\op{\Psi_0}{\Psi_0}+(p_x+p_y)\tr_X\op{\Psi_1}{\Psi_1}\notag.
\end{align}
If there exists two vectors $\vec{p},\vec{p}^{\;'} \in\mathfrak{M}$ such that $p_0+p_z\not=p_0'+p_z'$. Let  $p_0 + p_z = p_0' + p_z' + c$, then $p_x + py = p_x' + p_y' - c$. To satisfy the masking condition $\tr_X[M\mc{E}_{\vec{p}}(\op{0}{0}) M^\dagger]=\tr_X[M\mc{E}_{\vec{p}'}(\op{0}{0}) M^\dagger]$, we have
\begin{align*}
    &(p_0' + p_z' + c) \tr_X\op{\Psi_0}{\Psi_0} + (p_x' + p_y' - c) \tr_X\op{\Psi_1}{\Psi_1} \\
    &= (p_0' + p_z') \tr_X\op{\Psi_0}{\Psi_0} + (p_x' + p_y') \tr_X\op{\Psi_1}{\Psi_1}
\end{align*}
which requires that $\tr_X\op{\Psi_0}{\Psi_0}=\tr_X\op{\Psi_1}{\Psi_1}$.  Under this assumption, one then computes
\begin{align}
    &\tr_X[M\mc{E}_{\vec{p}}(\op{+}{+}) M^\dagger]\notag\\
    &=\tr_X\op{\Psi_0}{\Psi_0}+(p_0+p_x-\tfrac{1}{2})(\tr_X\op{\Psi_0}{\Psi_1}+\text{h.c.)},\notag\\
    &\tr_X[M\mc{E}_{\vec{p}}(\op{\widetilde{+}}{\widetilde{+}}) M^\dagger]\notag\\
    &=\tr_X\op{\Psi_0}{\Psi_0}-i(p_0+p_y-\tfrac{1}{2})(\tr_X\op{\Psi_0}{\Psi_1}-\text{h.c.)},
\end{align}

where $\ket{\widetilde{+}}=\frac{1}{\sqrt{2}}(\ket{0}+i\ket{1})$.  If there exists two vectors $\vec{p},\vec{p}^{\;'}\in\mathfrak{M}$ such that $p_0+p_x\not=p_0'+p_x'$, then combining the masking assumption $\tr_X[M\mc{E}_{\vec{p}}(\op{0}{0}) M^\dagger]=\tr_X[M\mc{E}_{\vec{p}'}(\op{0}{0}) M^\dagger]$ with the first equality above gives $\tr_X\op{\Psi_0}{\Psi_1}=-\tr_X\op{\Psi_1}{\Psi_0}$.  Under this further assumption, if there exists two vectors $\vec{p},\vec{p}^{\;'}\in\mathfrak{M}$ such that $p_0+p_y\not=p_0'+p_y'$, then the second equality above would imply that $\tr_X\op{\Psi_0}{\Psi_1}=\tr_X\op{\Psi_1}{\Psi_0}$, which would mean that $\tr_X\op{\Psi_1}{\Psi_0}=0$.  Yet, it is not possible that both $\tr_X\op{\Psi_0}{\Psi_0}=\tr_X\op{\Psi_1}{\Psi_1}$ and $\tr_X\op{\Psi_1}{\Psi_0}=0$. 

Recall that $\tr_X\op{\Psi_0}{\Psi_0}=\tr_X\op{\Psi_1}{\Psi_1}$ implies that $\ket{\Psi_0}$ and $\ket{\Psi_1}$ are two purifications of the same state. Then we have the relation $\ket{\Psi_1} = (U_A \otimes \mathbb{1}_B) \ket{\Psi_0}$. Write $\ket{\Psi_0}$ in Schmidt form
\begin{align*}
    \ket{\Psi_0} &= \sqrt{\lambda} \ket{a_0}_A \ket{b_0}_B + \sqrt{1-\lambda} \ket{a_1}_A \ket{b_1}_B\\
\end{align*}
for some orthonormal bases $\{a_i\}$ and $\{b_i\}$.
When taking the partial trace, cross terms will be eliminated, and we find
\begin{equation}
    \tr_B \op{\Psi_1}{\Psi_0} = \begin{pmatrix}
        \lambda u_{00} & (1-\lambda) u_{01} \\
        \lambda u_{10} & (1-\lambda) u_{11}
    \end{pmatrix}
    =0
\end{equation}
If $0<\lambda<1$, i.e. $\rho_A$ is rank 2, this forces all $u_{ij}=0$, which is impossible since $U_A$ is unitary. In the case that $\rho_A$ is a pure state, $\ket{\Psi_0}$ must be a product state. Define $\ket{\Psi_0} = \ket{a}_A \otimes \ket{b}_B$. Then $\ket{\Psi_1}$ must satisfy $\ket{\Psi_1} = U_A \ket{a}_A \otimes \ket{b}_B$ for some unitary operator $U_A$, but this implies $\tr_B\op{\Psi_1}{\Psi_0} \neq 0$.

Hence, we have reached a contradiction, and one of our three passing assumptions cannot be true.  That is, there must exist some $k\in\{x,y,z\}$ such that $p_{0}+p_k$ is constant for all $\vec{p}\in\mathfrak{M}$.
    
\end{proof}

\begin{example}

    \noindent The equality $p_0+p_k=c$ in Theorem \ref{Thm:masking-Paulis} along with normalization imposes two linear constraints on the probability vectors $\vec{p}$.  Almost any collection of maskable Pauli channels belongs to a two-parameter family of maskable channels (the exception being the case when $p_0+p_k=1$).  For example, when $k=X$ and $c<1$, we have a two-parameter family of maskable channels given by
    \begin{align}
           \mc{E}_{\mu,\nu}(\cdot)=&\mu (\cdot)+(c-\mu)\sigma_X(\cdot)\sigma_X\notag\\
           &+\nu \sigma_Y(\cdot)\sigma_Y+(1-c-\nu)\sigma_Z(\cdot)\sigma_Z\
    \end{align}
    for $\mu\leq c$ and $\nu\leq 1-c$.  This example shows a notable difference between channel and state masking.  Namely, every collection of maskable qubit states belongs to a one-parameter family of states corresponding to a circle on the Bloch sphere. In contrast, channel masking allows for multi-parameter families of maskable objects.
   
\end{example}

\subsubsection{Any family containing the identity}

We now turn to the special problem of masking a channel against the identity, $\id$.  As described in the introduction, this has the appealing interpretation of pushing all the noise of a given channel into the correlations between two subsystems, while leaving the reduced state dynamics unaffected.  Here, we completely characterize all the qubit channels that allow for such a process. We seek solutions to the equation
\begin{align}
\label{Eq:identity-masking-canonical}
\tr_X [M\rho M^\dagger] = \tr_X[M\mc{E}(\rho) M^\dagger],\quad X\in\{A,B\},\;\forall \rho.
\end{align}
A full solution to this problem for the case of qubit channels is given in the following.

\begin{theorem}
    \label{Thm:masking-identity}
    The set of qubit channels $\{\id,\mc{E}\}$ can be masked iff $\mc{E}$ is unital and has a pure state fixed point; i.e. $\mc{E}(\op{\psi}{\psi})=\op{\psi}{\psi}$ for some $\ket{\psi}$.
\end{theorem}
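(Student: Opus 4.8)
The plan is to prove both directions by passing to the two reduced channels of the masker. Writing $\alpha(\cdot)=\tr_B[M(\cdot)M^\dagger]$ and $\beta(\cdot)=\tr_A[M(\cdot)M^\dagger]$, the masking condition \eqref{Eq:identity-masking-canonical} is precisely $\alpha\circ\mc{E}=\alpha$ and $\beta\circ\mc{E}=\beta$, so the task becomes characterizing when a qubit-to-two-qubit isometry has both of its complementary reduced channels invariant under pre-composition with $\mc{E}$.

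For the ($\Leftarrow$) direction I would first establish a structural lemma: if $\mc{E}$ is unital with pure fixed point $\ket{\psi}$, then every Kraus operator of $\mc{E}$ is diagonal in the basis $\{\ket\psi,\ket{\psi^\perp}\}$. Indeed, since $\mc{E}(\op{\psi}{\psi})=\op{\psi}{\psi}$ is rank one and equals a sum of positive terms $K_i\op{\psi}{\psi}K_i^\dagger$, each $K_i\ket\psi\propto\ket\psi$, so each $K_i$ is upper triangular in this basis; comparing the $(\psi,\psi)$ entries of the trace-preservation relation $\sum_iK_i^\dagger K_i=\mbb{1}$ and the unitality relation $\sum_iK_iK_i^\dagger=\mbb{1}$ then forces the off-diagonal entries to vanish. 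Consequently $\mc{E}$ preserves the populations $\op{\psi}{\psi}$ and $\op{\psi^\perp}{\psi^\perp}$. Taking the ``copy'' masker $M=\op{00}{\psi}+\op{11}{\psi^\perp}$, whose two reduced channels are both the dephasing channel $\Delta$ in the $\{\ket\psi,\ket{\psi^\perp}\}$ basis, one checks $\Delta\circ\mc{E}=\Delta$, so $M$ masks $\{\id,\mc{E}\}$.

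For the ($\Rightarrow$) direction I would first extract a pure fixed point. The two-dimensional image of $M$, reshaped into $2\times2$ matrices, contains a rank-one element because the determinant of a generic combination of two basis matrices is a degree-two form in two complex variables and hence has a root; thus some input $\ket{\psi_0}$ satisfies $M\ket{\psi_0}=\ket{a}\ket{b}$. Its reduced states are pure, so by masking the state $M\mc{E}(\op{\psi_0}{\psi_0})M^\dagger$ has both marginals pure and therefore equals $\op{ab}{ab}=M\op{\psi_0}{\psi_0}M^\dagger$; applying $M^\dagger(\cdot)M$ and using $M^\dagger M=\mbb{1}$ gives $\mc{E}(\op{\psi_0}{\psi_0})=\op{\psi_0}{\psi_0}$.

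It remains to prove unitality, which I expect to be the main obstacle. Using the pre/post-unitary freedom that fixes the identity in the set (the invariance noted after Eq.~\eqref{Eq:Channel-masking-defn}) together with local unitaries on $AB$, I would normalize so that the fixed point is $\ket0$ and $M\ket0=\ket{00}$, and write $M\ket1=c_{01}\ket{01}+c_{10}\ket{10}+c_{11}\ket{11}$. Computing the Kraus operators of $\alpha$ and $\beta$ explicitly and imposing $\alpha\mc{E}=\alpha$ and $\beta\mc{E}=\beta$ entry by entry, the diagonal conditions force $[\mc{E}(\rho)]_{00}=\rho_{00}$ whenever $c_{01}$ or $c_{10}$ is nonzero, and the off-diagonal conditions then force $[\mc{E}(\rho)]_{01}=\rho_{01}$ unless both $c_{01}$ and $c_{10}$ vanish. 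Hence either $\mc{E}=\id$, or $M\ket1$ is itself a product state; in the latter case the fixed-point argument above applied to $\ket1$ gives $\mc{E}(\op{1}{1})=\op{1}{1}$, and together with $\mc{E}(\op{0}{0})=\op{0}{0}$ this yields $\mc{E}(\mbb{1})=\mbb{1}$. The delicate point is that a pure fixed point alone is consistent with non-unital channels (e.g.\ amplitude damping), so unitality cannot be read off from a single product vector; it must come from the \emph{simultaneous} invariance of both complementary reduced channels, which forces $M\ket1$ to be locally unentangled as well and thereby produces a second, orthogonal pure fixed point.
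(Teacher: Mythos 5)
Your proposal is correct, and its forward direction takes a genuinely different route from the paper's. The paper proves the theorem via two lemmas keyed to unitality: for unital $\mc{E}$ it argues in the Bloch picture that the transformation matrix $A$ must have eigenvalue $1$ (otherwise $M$ would have to mask every qubit state, which is impossible), and for non-unital $\mc{E}$ it derives a contradiction by showing $M$ would have to mask four states whose Bloch vectors lie on no single disk, invoking the hyperdisk characterization of maskable qubit states from Ref.~\cite{Ding2020}. You never branch on unitality and use no external state-masking results: you first extract a pure fixed point for \emph{any} channel maskable with $\id$, from the elementary facts that every two-dimensional subspace of $\mbb{C}^2\otimes\mbb{C}^2$ contains a product vector (the determinant is a degree-two form with a projective root) and that a bipartite state with pure marginals is a product state; you then normalize so that $\mc{E}(\op{0}{0})=\op{0}{0}$, $M\ket{0}=\ket{00}$, $M\ket{1}=c_{01}\ket{01}+c_{10}\ket{10}+c_{11}\ket{11}$, and read off from the two reduced channels that either $\mc{E}=\id$ or $c_{01}=c_{10}=0$, i.e.\ $M$ is the copy isometry, whence $\mc{E}$ fixes both $\op{0}{0}$ and $\op{1}{1}$ and is unital. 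I checked your entry-by-entry claims: with $\sigma=\mc{E}(\rho)$, the diagonal conditions read $(1-|c_{01}|^2)\sigma_{11}=(1-|c_{01}|^2)\rho_{11}$ and $(1-|c_{10}|^2)\sigma_{11}=(1-|c_{10}|^2)\rho_{11}$, so the diagonal is in fact preserved \emph{unconditionally} (your hedge ``whenever $c_{01}$ or $c_{10}$ is nonzero'' is unnecessary, since $|c_{01}|$ and $|c_{10}|$ cannot both equal $1$), and the off-diagonal conditions give $\bar{c}_{10}\sigma_{01}=\bar{c}_{10}\rho_{01}$ and $\bar{c}_{01}\sigma_{01}=\bar{c}_{01}\rho_{01}$, exactly as you claim; the dichotomy and the conclusion follow. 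Your backward direction uses the same construction as the paper (the copy masker), differing only in justification: you show the Kraus operators are diagonal in the $\{\ket{\psi},\ket{\psi^\perp}\}$ basis, where the paper shows the Bloch matrix has block form fixing the $z$-axis. As for what each approach buys: the paper's argument is geometric and yields a standalone no-go lemma for non-unital channels, staying short by leaning on known state-masking theorems; yours is elementary and self-contained, works uniformly across the unital and non-unital cases, and as a by-product characterizes the maskers themselves --- up to conjugation of $\mc{E}$ and local unitaries on $AB$, the only masker of $\{\id,\mc{E}\}$ with $\mc{E}\neq\id$ is the copy isometry.
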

\begin{remark}
    Geometrically, we can understand these channels as being maps on the Bloch sphere that preserve the origin as well as two anti-podal points on the surface of the Bloch sphere.  The fact that $\mc{E}$ is unital with $\ket{\psi}$ as a fixed point means that $\ket{\psi^\perp}$ is also a fixed point, where $\ket{\psi^\perp}$ is orthogonal to $\ket{\psi}$ and satisfying $\mbb{1}=\op{\psi}{\psi}+\op{\psi^\perp}{\psi^\perp}$.
\end{remark}

\begin{remark}
    
While Theorem \ref{Thm:masking-identity} involves masking just a single qubit channel $\mc{E}$, it can easily be extended to include more channels.  Suppose that $\{\id,\mc{E}_\lambda\}_\lambda$ is a family qubit channels that can be masked.  Then so can the pair of channels $\{\id,\sum_\lambda p_\lambda \mc{E}_\lambda\}$, where $p_\lambda$ is a probability distribution with $p_\lambda>0$ and chosen such that $\sum_\lambda p_\lambda \mc{E}_\lambda(\mbb{1})\not=\mbb{1}$ if one of the channels $\mc{E}_\lambda$ is non-unital (and arbitrarily chosen otherwise) \footnote{To show that such a $p_\lambda$ can always be chosen this way, suppose $\mc{E}_1(\mbb{I})=\sigma\not=\mbb{I}$ and $p_1\mc{E}_1(\mbb{I})+\sum_{i>1}p_i\mc{E}_i(\mbb{I})=\mbb{I}$.  Then by considering a sufficiently small perturbation $p_1\mapsto(1-\epsilon)p_1$ and $p_i\mapsto(1+\epsilon\frac{p_1}{1-p_1})p_i$ for $i>1$, we have $(1-\epsilon)p_1\mc{E}_1(\mbb{I})+(1+\epsilon\frac{p_1}{1-p_1})\sum_{i>1}p_i\mc{E}_i(\mbb{I})=\mbb{I}-\epsilon\frac{p_1}{1-p_1}(\sigma-\mbb{I})$, which does not equal the identity for all $\epsilon>0$}. By Theorem \ref{Thm:masking-identity}, the convex combination $\sum_\lambda p_\lambda \mc{E}_\lambda$ must have a pure-state fixed point.  This is possible only if each of the individual channels $\mc{E}_\lambda$ have the same pure-state fixed point.  Moreover, $\sum_\lambda p_\lambda \mc{E}_\lambda$ must be unital, which is not possible by our choice of $p_\lambda$ unless all the $\mc{E}_\lambda$ are unital themselves.  We therefore conclude that the $\mc{E}_\lambda$ must all be unital and have a common pure-state fixed point.  Conversely, if all the $\mc{E}_\lambda$ are unital and have the same pure-state fixed point, then the masker constructed in Lemma \ref{Lem:unital-masking} below will be a masker for the set $\{\id,\mc{E}_\lambda\}_\lambda$, for the same reason as given in that proof.  We summarize in the following.
\begin{corollary}
A family of qubit channels $\{\id,\mc{E}_\lambda\}_\lambda$ can be masked iff all the $\mc{E}_\lambda$ are unital and possess a common pure-state fixed point.
\end{corollary}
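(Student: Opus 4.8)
The plan is to reduce this multi-channel statement to the single-channel result of Theorem~\ref{Thm:masking-identity} by exploiting the linearity of the masking condition in the channel argument, and then to argue that unitality together with a pure-state fixed point of a convex mixture of the $\mc{E}_\lambda$ force those same two properties onto each individual channel.

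For the forward direction, I would first note that the masking condition of Eq.~\eqref{Eq:identity-masking-canonical} is linear in the channel. Hence if one isometry $M$ masks every pair $\{\id,\mc{E}_\lambda\}$, then for any strictly positive probability distribution $\{p_\lambda\}$,
\begin{align}
&\tr_X\Big[M\Big(\textstyle\sum_\lambda p_\lambda\mc{E}_\lambda\Big)(\rho)M^\dagger\Big]\notag\\
&=\sum_\lambda p_\lambda\,\tr_X[M\mc{E}_\lambda(\rho)M^\dagger]=\tr_X[M\rho M^\dagger],\notag
\end{align}
so the same $M$ masks $\{\id,\sum_\lambda p_\lambda\mc{E}_\lambda\}$ as well. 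Applying Theorem~\ref{Thm:masking-identity} to this single pair, the mixture $\sum_\lambda p_\lambda\mc{E}_\lambda$ must be unital and must possess a pure-state fixed point $\op{\psi}{\psi}$.

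Next I would promote each of these two properties of the mixture to the individual $\mc{E}_\lambda$. For the fixed point, I would use that a pure state is an extreme point of the state space: the relation $\sum_\lambda p_\lambda\mc{E}_\lambda(\op{\psi}{\psi})=\op{\psi}{\psi}$ expresses an extreme point as a convex combination of density operators, so every term must already equal $\op{\psi}{\psi}$, yielding a \emph{common} pure-state fixed point. Unitality is more delicate, since a unital mixture need not be a mixture of unital maps; here I would exploit the freedom in choosing $\{p_\lambda\}$ as in the footnote: were some $\mc{E}_\mu$ non-unital, the weights could be perturbed so that $\sum_\lambda p_\lambda\mc{E}_\lambda(\mbb{1})\neq\mbb{1}$, contradicting the unitality forced by Theorem~\ref{Thm:masking-identity}. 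This rules out any non-unital $\mc{E}_\lambda$.

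For the reverse direction I would simply reuse the explicit masker of Lemma~\ref{Lem:unital-masking}, observing that its construction depends only on the shared data---unitality together with the common fixed point $\ket{\psi}$---so the identical isometry masks $\{\id,\mc{E}_\lambda\}$ for all $\lambda$ simultaneously, by the same computation as in the single-channel proof. I expect the main obstacle to be precisely the unitality step: passing from unitality of the mixture to unitality of each summand is not automatic and depends essentially on the perturbative choice of weights, which is the one place where the convex-combination reduction would break down without the careful selection of $\{p_\lambda\}$.
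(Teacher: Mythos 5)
Your proposal is correct and follows essentially the same route as the paper: reduce to the single pair $\{\id,\sum_\lambda p_\lambda\mc{E}_\lambda\}$ via linearity of the masking condition, invoke Theorem~\ref{Thm:masking-identity}, extract the common fixed point from extremality of pure states and unitality of each $\mc{E}_\lambda$ from the perturbed choice of weights, and reuse the masker of Lemma~\ref{Lem:unital-masking} for the converse. Your write-up is in fact slightly more explicit than the paper's on the linearity step and the extreme-point argument, but the mathematical content is identical.
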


\end{remark}

To prove Theorem \ref{Thm:masking-identity}, we continue with the geometrical picture and recall in more detail that every qubit channel $\mc{E}$ can be represented by an affine transformation on the Bloch vectors $\mbf{n} \rightarrow A \mbf{n}+\mbf{b}\in\mbb{R}^3$ \cite{Nielsen-Chuang-2010a}.  When the channel is unital (meaning that $\mc{E}(\mbb{1})=\mbb{1}$), the vector $\mbf{b}$ vanishes. The unital case is considered first.
\begin{lemma}
\label{Lem:unital-masking}
    If $\mc{E}$ is a unital qubit channel, then the set $\id,\mc{E}\}$ is maskable iff $\mc{E}$ has a pure state fixed point; i.e. $\mc{E}(\op{\psi}{\psi})=\op{\psi}{\psi}$ for some $\ket{\psi}$.
\end{lemma}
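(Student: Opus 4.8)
The plan is to work in the Bloch/affine representation $\mbf{n}\mapsto A\mbf{n}$ of the unital channel $\mc{E}$ and to exploit the key reformulation of the hypothesis: $\mc{E}$ has a pure-state fixed point iff $1$ is an eigenvalue of the real $3\times 3$ matrix $A$, since a unit eigenvector of $A$ with eigenvalue $1$ is exactly the Bloch vector of a fixed pure state. Both directions then reduce to linear algebra on $A$ together with one complete-positivity argument. Throughout I would use the invariance of maskability under conjugation noted after Eq.~\eqref{Eq:Channel-masking-defn}: conjugating the pair $\{\id,\mc{E}\}$ by a common unitary $\mc{U}$ sends it to $\{\id,\mc{U}^\dagger\circ\mc{E}\circ\mc{U}\}$, so I may assume without loss of generality that the candidate fixed point is $\ket{0}$.

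For the ($\Rightarrow$) direction I would expand the masking condition $\tr_X[M\rho M^\dagger]=\tr_X[M\mc{E}(\rho)M^\dagger]$ on the operator basis $\{\mbb{1},\sigma_x,\sigma_y,\sigma_z\}$. Unitality makes the $\mbb{1}$ component automatic, and since $\mc{E}$ is trace preserving it sends each $\sigma_k$ to a traceless operator $\mc{E}(\sigma_k)=\sum_{j}A_{jk}\sigma_j$. Writing $\tau^X_j:=\tr_X[M\sigma_j M^\dagger]$, the masking equations collapse to the single linear relation $(A^T-I)\,\vec{\tau}^{\,X}=0$ for $X\in\{A,B\}$, where $\vec{\tau}^{\,X}=(\tau^X_x,\tau^X_y,\tau^X_z)$ is a vector of operators. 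If $1\notin\mathrm{spec}(A)$ then $A^T-I$ is invertible, forcing $\tau^X_x=\tau^X_y=\tau^X_z=0$ for both $X$; but then $\tr_X[M\rho M^\dagger]=\tfrac{1}{2}\tau^X_0$ is independent of $\rho$, so $M$ would mask \emph{every} qubit state, contradicting the impossibility of universal bipartite qubit masking. Hence $1\in\mathrm{spec}(A)$ and $\mc{E}$ has a pure-state fixed point.

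For ($\Leftarrow$) I would first conjugate so that the pure-state fixed point is $\ket{0}$; unitality then forces $\ket{1}$ to be fixed as well, since $\mc{E}(\op{1}{1})=\mc{E}(\mbb{1}-\op{0}{0})=\mbb{1}-\op{0}{0}=\op{1}{1}$. The key structural step is to show that in the $\{\ket{0},\ket{1}\}$ basis every Kraus operator $K_k$ of $\mc{E}$ is diagonal: from $\bra{1}\mc{E}(\op{0}{0})\ket{1}=\sum_k|\bra{1}K_k\ket{0}|^2=0$ one gets $\bra{1}K_k\ket{0}=0$, and from $\bra{0}\mc{E}(\op{1}{1})\ket{0}=0$ one gets $\bra{0}K_k\ket{1}=0$. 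Consequently $\mc{E}$ preserves the diagonal of every input, i.e. $\bra{m}\mc{E}(\rho)\ket{m}=\bra{m}\rho\ket{m}$ for $m\in\{0,1\}$. I would then verify that the ``copy'' isometry $M=\op{00}{0}+\op{11}{1}$ is a masker: a direct computation gives $\tr_X[M\rho M^\dagger]=\sum_m \bra{m}\rho\ket{m}\,\op{m}{m}$ for both $X\in\{A,B\}$, so $\tr_X[M\mc{E}(\rho)M^\dagger]$ and $\tr_X[M\rho M^\dagger]$ coincide precisely because $\mc{E}$ preserves the diagonal.

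I expect the main obstacle to be this structural step in the converse: recognizing that unitality upgrades a single pure fixed point into an antipodal pair of fixed points, and that complete positivity then pins every Kraus operator to be diagonal, which is exactly what lets the simple copy masker succeed. The forward direction is conceptually short but hinges on the slightly unexpected move of converting the algebraic failure $1\notin\mathrm{spec}(A)$ into the previously established no-go theorem for universal qubit masking, rather than constructing a contradiction by hand.
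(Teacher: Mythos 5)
Your proposal is correct and follows essentially the same route as the paper: the forward direction turns non-singularity of $A-\mbb{1}$ (equivalently $A^T-\mbb{1}$) into the conclusion $\tr_X[M(\mbf{v}\cdot\vec{\sigma})M^\dagger]=0$ for all $\mbf{v}$ and then contradicts the no-go theorem for universal bipartite qubit masking, while the converse rotates the fixed point to $\ket{0}$ and verifies the same copy isometry $M=\op{00}{0}+\op{11}{1}$. The only difference is cosmetic and lies in justifying the converse: the paper derives the Bloch-matrix block form (invariance of the $x$--$y$ plane) from $\Vert A\Vert\leq 1$, whereas you derive diagonality of the Kraus operators from complete positivity and the antipodal pair of fixed points---an equivalent way of seeing that the copy masker's output depends only on the diagonal entries, which $\mc{E}$ preserves.
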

\begin{proof}
We begin by noting that a unital channel $\mc{E}$ with Bloch sphere action $\mbf{n}\mapsto A\mbf{n}$ has a pure-state fixed point iff $A$ has an eigenvalue of $\lambda=1$.   ($\Rightarrow$) Now suppose that $\{\mathbb{1},\mc{E}\}$ is maskable.  Let the Bloch sphere action of $\mc{E}$ be given by $\mbf{n} \rightarrow A \mbf{n}$.  We claim that $A$ must have an eigenvalue of $\lambda=1$.  Suppose on the contrary that it does not.  Then $A-\mbb{1}$ is non-singular, and so for any vector $\mbf{v}$, we can identify the nonzero vector $\mbf{w}=(A-\mbb{1})^{-1}\mbf{v}$ such that $A\hat{\mbf{w}}=\hat{\mbf{w}}+\mbf{v}/\Vert\mbf{w}\Vert$, where $\hat{\mbf{w}}=\mbf{w}/\Vert\mbf{w}\Vert$.  Then
\begin{subequations}
\begin{align}
    \tr_X[M&\mc{E}(\op{\hat{\mbf{w}}}{\hat{\mbf{w}}})M^\dagger]=\tr_X[M\frac{1}{2}(\mbb{1}+\frac{\mbf{w}+\mbf{v}}{\Vert\mbf{w}\Vert}\cdot\vec{\sigma})M^\dagger],\notag\\
    \tr_X[M&\op{\hat{\mbf{w}}}{\hat{\mbf{w}}}M^\dagger]=\tr_X[M\frac{1}{2}(\mbb{1}+\frac{\mbf{w}}{\Vert\mbf{w}\Vert}\cdot\vec{\sigma})M^\dagger].\notag
\end{align}
\end{subequations}
The masking conditions of Eq. \eqref{Eq:identity-masking-canonical} then implies that $\tr_X[M(\mbf{v}\cdot\vec{\sigma})M^\dagger]=0$, which further means that
\begin{align}
    \tr_X[M\op{\hat{\mbf{v}}}{\hat{\mbf{v}}}M^\dagger]&=\tr_X[M\frac{1}{2}(\mbb{1}+\hat{\mbf{v}}\cdot\vec{\sigma})M^\dagger]\notag\\
    &=\frac{1}{2}\tr_X[MM^\dagger].
\end{align}
Since $\hat{\mbf{v}} = \mbf{v}/ \Vert\mbf{v}\Vert$ is arbitrary, and the RHS is independent of $\mbf{v}$, we see that $M$ would need to be a masker for all qubit states, which is impossible \cite{Modi2018}.  We therefore conclude that $A$ must have an eigenvalue of $\lambda=1$, and so $\mc{E}$ has a pure-state fixed point.

($\Leftarrow$)  Suppose that $\mc{E}(\op{\psi}{\psi})=\op{\psi}{\psi}$.  Let $U$ be a unitary such that $U\ket{0}=\ket{\psi}$ such that the conjugated channel $\mc{E}':=\mc{U}^\dagger \circ\mc{E}\circ\mc{U}$ has $\ket{0}$ as a fixed point.  Since   $\{\id,\mc{E}\}$ is maskable iff $\{\id,\mc{E}'\}$ is maskable, it suffices to prove that the latter set is maskable.  

With $\op{0}{0}=\frac{1}{2}(\mbb{1}+\sigma_z)$ being a fixed point of $\mc{E}'$, the Bloch sphere transformation matrix of $\mc{E}'$ will satisfy $A\hat{z}=\hat{z}$ (and so $\mc{E}'(\sigma_z)=\sigma_z$).  Furthermore, since $\Vert A\Vert\leq 1$, the matrix $A$ must have block form
\[A=\begin{pmatrix}a&b&0\\c&d&0\\0&0&1\end{pmatrix}.\]
We therefore see that the channel $\mc{E}'$ acts invariantly on the $x-y$ plane of the Bloch sphere.  With this observation, we can prove that the simple isometry $M\ket{0}=\ket{00}$ and $M\ket{1}=\ket{11}$ serves as a masker for $\{\id,\mc{E}'\}$.  To see this, note that
\begin{align}
    &\tr_X[MM^\dagger]=\mbb{1},\;\;\tr_X[M\sigma_zM^\dagger]=\sigma_z,\notag\\
    &\tr_X[M\sigma_xM^\dagger]=\tr_X[M\sigma_yM^\dagger]=0.\label{Eq:partial-Pauli-vanish}
\end{align}
Then for an arbitrary state $\rho=\frac{1}{2}(\mbb{1}+\mbf{n}\cdot\vec{\sigma})$, we have
\begin{align}
    \tr_X[M\mc{E}'(\rho)M^\dagger]&=\frac{1}{2}\tr_X[M\mc{E}'(\mbb{1}+\mbf{n}\cdot\vec{\sigma})M^\dagger]\notag\\
    &=\frac{1}{2}\tr_X[MM^\dagger+n_zM\sigma_z M^\dagger\notag\\
    &\qquad\quad\quad+M\mc{E}(n_x\sigma_x+n_y\sigma_y)M^\dagger]\notag\\
    &=\frac{1}{2}(\mbb{1}+n_z\sigma_z),
    \label{Eq:unital-masker-1}
\end{align}
where the second equality follows from the facts that $\mc{E}'$ is unital and $\mc{E}'(\sigma_z)=\sigma_z$, while the third equality follows from the facts that $\mc{E}'$ acts invariantly on the $x-y$ plane and Eq. \eqref{Eq:partial-Pauli-vanish}.  At the same time,
\begin{align}
    \tr_X[M\rho M^\dagger]&=\frac{1}{2}\tr_X[MM^\dagger+n_zM\sigma_z M^\dagger\notag\\
    &\qquad\quad\quad+M(n_x\sigma_x+n_y\sigma_y)M^\dagger]\notag\\
    &=\frac{1}{2}(\mbb{1}+n_z\sigma_z),
    \label{Eq:unital-masker-2}
\end{align}
where we again use Eq. \eqref{Eq:partial-Pauli-vanish}.  Since $\rho$ was arbitrary, a comparison of Eqns. \eqref{Eq:unital-masker-1} and \eqref{Eq:unital-masker-2} shows that $M$ is a masker for $\{\id,\mc{E}'\}$.
\end{proof}

To complete the proof of Theorem \ref{Thm:masking-identity}, we now consider the non-unital case.
\begin{lemma}
    If $\mc{E}$ is a non-unital qubit channel, then it is not possible to mask the set $\{\id,\mc{E}\}$.
\end{lemma}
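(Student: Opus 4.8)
The plan is to argue by contradiction, leveraging the affine Bloch-sphere representation $\mbf{n}\mapsto A\mbf{n}+\mbf{b}$ of $\mc{E}$ introduced above, where non-unitality means precisely $\mbf{b}\neq 0$. Suppose a masker $M$ existed for $\{\id,\mc{E}\}$. Writing an arbitrary input as $\rho=\tfrac12(\mbb{1}+\mbf{n}\cdot\vec{\sigma})$, so that $\mc{E}(\rho)=\tfrac12(\mbb{1}+(A\mbf{n}+\mbf{b})\cdot\vec{\sigma})$, I would first observe that the masking condition \eqref{Eq:identity-masking-canonical} is linear in $\rho$ and hence reduces to a statement about the operator-valued linear map $\Phi_X(\mbf{m}):=\tr_X[M(\mbf{m}\cdot\vec{\sigma})M^\dagger]$. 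Concretely, \eqref{Eq:identity-masking-canonical} holds for all $\rho$ iff $\Phi_X(\mbf{n})=\Phi_X(A\mbf{n}+\mbf{b})$ for every $\mbf{n}\in\mbb{R}^3$ and $X\in\{A,B\}$.

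From this single identity I would extract two consequences. Setting $\mbf{n}=0$ and using linearity ($\Phi_X(0)=0$) gives $\Phi_X(\mbf{b})=0$; subtracting this back off then yields $\Phi_X((\mbb{1}-A)\mbf{n})=0$ for all $\mbf{n}$, i.e. $\Phi_X$ annihilates the range of $\mbb{1}-A$ for both $X$. Now I would split on whether $1$ is an eigenvalue of $A$. If $1\notin\mathrm{spec}(A)$, then $\mbb{1}-A$ is invertible, its range is all of $\mbb{R}^3$, and therefore $\Phi_A\equiv\Phi_B\equiv 0$. This forces both reduced outputs $\tr_X[M\rho M^\dagger]=\tfrac12\tr_X[MM^\dagger]$ to be independent of $\rho$, meaning $M$ masks every qubit state---impossible, exactly as invoked in the proof of Lemma \ref{Lem:unital-masking}. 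Hence $A$ must have $1$ as an eigenvalue.

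It remains to rule out eigenvalue $1$ together with $\mbf{b}\neq 0$. Since $A$ is a real matrix, the eigenvalue $1$ admits a real unit eigenvector $\hat{\mbf{u}}$ with $A\hat{\mbf{u}}=\hat{\mbf{u}}$, so that the two antipodal pure states with Bloch vectors $\pm\hat{\mbf{u}}$ are mapped by $\mc{E}$ to states with Bloch vectors $\pm\hat{\mbf{u}}+\mbf{b}$. Demanding that both images be legitimate states, $\Vert\pm\hat{\mbf{u}}+\mbf{b}\Vert\leq 1$, and then squaring and adding the two inequalities gives $2\Vert\mbf{b}\Vert^2\leq 0$, forcing $\mbf{b}=0$ and contradicting non-unitality. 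The main obstacle---and the step I would take care over---is the logical placement of the two cases: the crux is to phrase the range condition through the \emph{eigenvalues} of $A$ (rather than its singular values or a King--Ruskai canonical form), since it is precisely the existence of a real eigenvector for eigenvalue $1$ that produces an antipodal pair of pure inputs whose validity annihilates $\mbf{b}$. This keeps the argument free of explicit complete-positivity inequalities and parallels the eigenvalue bookkeeping already used in Lemma \ref{Lem:unital-masking}.
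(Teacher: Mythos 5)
Your proof is correct, and it takes a genuinely different route from the paper's. The paper splits on whether $A$ is proportional to the identity: in the generic case it constructs an explicit set of pure states $\{\ket{\pm\hat{\mbf{b}}},\ket{\hat{\mbf{c}}},\ket{\hat{\mbf{d}}}\}$ that the masker would have to mask simultaneously, and derives a contradiction from the hyperdisk characterization of maskable qubit state sets in Ref.~\cite{Ding2020}; in the degenerate case $A=\lambda\mbb{1}$ it shows the whole Bloch sphere would be masked, contradicting the no-masking theorem. You instead split on whether $1\in\mathrm{spec}(A)$. When $1\notin\mathrm{spec}(A)$, your observation that $\Phi_X$ annihilates both $\mbf{b}$ and the range of $\mbb{1}-A$ (all of $\mbb{R}^3$ here) forces $\Phi_X\equiv 0$ and hence universal masking; this generalizes the paper's Case~(ii) argument from $A=\lambda\mbb{1}$ to every $A$ with $\mbb{1}-A$ invertible. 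When $1\in\mathrm{spec}(A)$, your parallelogram-law argument on the images of the antipodal pair $\pm\hat{\mbf{u}}$ (namely $\Vert\hat{\mbf{u}}+\mbf{b}\Vert^2+\Vert-\hat{\mbf{u}}+\mbf{b}\Vert^2=2+2\Vert\mbf{b}\Vert^2\leq 2$) is a pure positivity statement about $\mc{E}$ that never touches the masker: it shows a non-unital qubit channel can never have $1$ as an eigenvalue of $A$, so this case is vacuous for the channels under consideration. The net effect is a leaner proof: it needs only the no-universal-masking theorem (which the paper already invokes in Lemma~\ref{Lem:unital-masking} and in its own Case~(ii)), completely avoids the finer disk characterization of Ref.~\cite{Ding2020}, and isolates a reusable structural fact --- that $\mbb{1}-A$ is invertible for every non-unital qubit channel. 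What the paper's longer construction buys in exchange is the explicit geometric picture, exhibiting concrete maskable states whose Bloch vectors cannot lie on a single disk, which ties the failure of channel masking directly to the geometry of state masking that runs through the rest of the paper.
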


\begin{proof}

    Let $\mc{E}$ be an arbitrary non-unital channel, and suppose on the contrary that a masker exists for $\{id,\mc{E}\}$.  The action of $\mc{E}$ on the Bloch sphere is now described by an affine transformation $\mbf{n}\mapsto A\mbf{n}+\mbf{b}$, with $\mbf{b}\not=0$.  This immediately implies that $\mbb{I}-A$ is invertible.  Indeed, otherwise $A$ would have an eigenvalue of $\lambda=1$, meaning that $\pm\hat{\mbf{n}}\mapsto \pm\hat{\mbf{n}} +\mbf{b}$.  However, one of the vectors $\pm\hat{\mbf{n}} +\mbf{b}$ will have norm larger than one, which is not possible for a CPTP map.
    
    Now for an arbitrary input Bloch vector $\hat{\mbf{n}}$, the masking condition of Eq. \eqref{Eq:identity-masking-canonical} can then be stated in the Pauli representation as
    \begin{align}
        &\tr_X[M(\mathbb{I}+\hat{\mbf{n}}\cdot\vec{\sigma})M^\dagger]=\tr_{X}[M(\mbb{I}+(A\hat{\mbf{n}}+\mbf{{b}})\cdot\vec{\sigma})M^\dagger]\notag\\
        \Rightarrow \quad&\tr_X[M(\hat{\mbf{n}}-A\hat{\mbf{n}})\cdot\vec{\sigma}M^\dagger]=\tr_X[M\mbf{b}\cdot\vec{\sigma}M^\dagger]
    \end{align}
Therefore, we have that
\begin{align}
\label{Eq:constant-n}
    \tr_X[M(\mbb{I}-A)\hat{\mbf{n}}\cdot\vec{\sigma}M^\dagger]
\end{align}
is constant for all unit vectors $\hat{\mbf{n}}$.  Since $\mathbb{I}-A$ is non-singular, it will have singular value decomposition $\mbb{I}-A=\sum_{i=1}^3 \sigma_i\mbf{\hat{w}}_i\cdot\hat{\mbf{v}}^T_i$, where $\sigma_i>0$, and both sets of vectors $\{\hat{\mbf{v}}_i\}_{i=1}^3$ and $\{\hat{\mbf{w}}_i\}_{i=1}^3$ are orthonormal.  Consider the eight unit vectors labeled by three bits $(b_1,b_2,b_3)\in\{0,1\}^{\times 3}$:
\begin{equation}
    \hat{\mbf{s}}_{b_1,b_2,b_3}:=\frac{1}{\sqrt{3}}\left[(-1)^{b_1}\hat{\mbf{v}}_1+(-1)^{b_2}\hat{\mbf{v}}_2+(-1)^{b_3}\hat{\mbf{v}}_3\right].
\end{equation}
Under the action of $\mbb{I}-A$, we have the transformation
\begin{align}
   \mbf{t}_{b_1,b_2,b_3}:= (\mbb{I}-A)\hat{\mbf{s}}_{b_1,b_2,b_3}=\frac{1}{\sqrt{3}}\sum_{i=1}^3(-1)^{b_i}\sigma_i\hat{\mbf{w}}_i.
\end{align}
While the $\mbf{t}_{b_1,b_2,b_3}$ are no longer unit vectors, they will all have the same norm $\Vert \mbf{t}_{b_1,b_2,b_3}\Vert$ due to the orthogonality of the $\hat{\mbf{w}}_i$.  Hence, from Eq. \eqref{Eq:constant-n}, we conclude that
\begin{align}
    \frac{\tr_X[M(\mbb{I}-A)\hat{\mbf{s}}_{b_1,b_2,b_3}\cdot\vec{\sigma}M^\dagger]}{\Vert \mbf{t}_{b_1,b_2,b_3}\Vert}&=\frac{\tr_X[M \mbf{t}_{b_1,b_2,b_3}\cdot\vec{\sigma}M^\dagger]}{\Vert \mbf{t}_{b_1,b_2,b_3}\Vert}\notag\\
    &=\tr_X[M \hat{\mbf{t}}_{b_1,b_2,b_3}\cdot\vec{\sigma}M^\dagger]
\end{align}
are constant for all $(b_1,b_2,b_3)\in\{0,1\}^{\times 3}$, where now the $\hat{\mbf{t}}_{b_1,b_2,b_3}$ are unit vectors.  Consequently,
\begin{equation}
    \tr_X[M \frac{1}{2}(\mbb{I}+\hat{\mbf{t}}_{b_1,b_2,b_3}\cdot\vec{\sigma})M^\dagger]=\tr[M\op{\hat{\mbf{t}}_{b_1,b_2,b_3}}{\hat{\mbf{t}}_{b_1,b_2,b_3}}M^\dagger]
\end{equation}
is also constant for all $(b_1,b_2,b_3)\in\{0,1\}^{\times 3}$.  This says we have a masker for the states $\ket{\hat{\mbf{t}}_{b_1,b_2,b_3}}$.  But this is impossible since the eight vectors $\hat{\mbf{t}}_{b_1,b_2,b_3}$ lie in eight different octants of the Bloch sphere, and therefore not on the same hyperdisk.  Indeed, every hyperplane in $\mbb{R}^3$ will fail to intersect at least one octant \cite{yao1983}. 

This can be seen via the equation of a plane $P$ in $\mbb{R}^3$
\begin{align*}
    Ax + By + Cz + D=0
\end{align*}
where $(A,B,C) \neq (0,0,0)$.
Without loss of generality, assume $A,B,C > 0$, as a change in sign will simply reflect $P$ across the corresponding axis, permuting the octants intersecting $P$. Suppose that $P$ intersects all eight octants. Then $P$ must contain a point with $x,y,z > 0$ corresponding to the $(+,+,+)$ octant, and a point with $x,y,z < 0$ in the $(-,-,-)$ octant.

If $D > 0$, then every $(x,y,z)\in P$ satisfies $Ax + By + Cz < 0 $, but for points $x,y,z >0$ with $A,B,C > 0$, we have $Ax + By + Cz > 0$. Therefore $P$ cannot contain any points in the $(+,+,+)$ octant. If $D < 0$, every $(x,y,z) \in P$ must satisfy $Ax + By + Cz > 0$, but this is impossible for $x,y,z <0$ and $A,B,C >0$. Thus $P$ cannot contain any points in the $(-,-,-)$ octant. If $D=0$, $Ax + By + Cz = 0$ cannot be satisfied by any point with $x,y,z>0$ or $x,y,z<0$, and $P$ cannot intersect octants $(+,+,+)$ and $(-,-,-)$. Therefore, $P$ can intersect at most 7 octants.
\end{proof}

\subsection{Masking classical channels}

We close our study by considering the classical analog of channel masking.  A classical channel on discrete sets $\mc{X}\to \mc{Y}$ is given by a collection of conditional probability distributions $p(y|x)$, with $x\in\mc{X}$ and  $y\in\mc{Y}$.  The corresponding CPTP then has the form $\mc{C}(\cdot)=\sum_{x,y}p(y|x)\op{y}{y}\bra{x}\cdot\ket{x}$.  Moreover, the classical analog of a unitary channel is just a permutation $\Pi$.

We first observe that it is impossible for any classical circuit to mask two distinct permutations $\Pi_1$ and $\Pi_2$ on some set $\mc{X}=\{1,2,\cdots,d\}$.  Let $\ket{x}$ be such that $\Pi_1\ket{x}=\ket{y}$ and $\Pi_2\ket{x}=\ket{y'}$ with $y\not=y'$.  Any reversible classical masker for $\{\Pi_1,\Pi_2\}$ would be a one-to-one mapping $M:\{1,\cdots,d\}\mapsto \{1,\cdots,d\}^{\times 2}$.  However, this means that $M\ket{y}=\ket{z_1}\ket{z_2}$ and $M\ket{y'}=\ket{z_1'}\ket{z_2'}$ for $\{z_1,z_2\}\not=\{z_1',z_2'\}$.  Therefore, the final state for at least one of the subsystems will differ depending on whether $\Pi_1$ or $\Pi_2$ is applied to input $\ket{x}$; i.e. masking is not possible.

On the other hand, using a quantum masker \textit{any} family of classical channels can be masked.
\begin{lemma}
    Any family of classical channels with finite input and output sets $\mc{X}$ and $\mc{Y}$ can be masked by a quantum masker.
\end{lemma}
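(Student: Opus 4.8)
The plan is to exploit the defining feature of classical channels: regardless of the input, the output is always diagonal in the fixed output basis $\{\ket{y}\}_{y\in\mc{Y}}$. First I would write a generic member of the family as $\mc{C}_\lambda(\rho) = \sum_{y} q^\lambda_\rho(y)\op{y}{y}$, where $q^\lambda_\rho(y) = \sum_{x} p_\lambda(y|x)\bra{x}\rho\ket{x}$ is a probability distribution over $\mc{Y}$ (it normalizes to $\tr\rho$) depending on both the input $\rho$ and the label $\lambda$. The crucial observation is that, because every channel output is a convex mixture of the orthonormal states $\op{y}{y}$, masking the entire family collapses to masking the single orthonormal set $\{\ket{y}\}_{y\in\mc{Y}}$: if an isometry $M$ sends each $\ket{y}$ to a bipartite state whose two marginals are both independent of $y$, then by linearity every output $M\mc{C}_\lambda(\rho)M^\dagger$ will automatically have $\lambda$-independent marginals.

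Next I would construct such an $M$ by conjugate coding. Setting $d_Y = |\mc{Y}|$ and $\omega = e^{2\pi i/d_Y}$, let $Z = \sum_{k=0}^{d_Y-1}\omega^k\op{k}{k}$ and $\ket{\Phi_0} = \tfrac{1}{\sqrt{d_Y}}\sum_{k=0}^{d_Y-1}\ket{kk}$, and define the masker $M:\mbb{C}^{d_Y}\to\mbb{C}^{d_Y}\otimes\mbb{C}^{d_Y}$ by $M\ket{y} = \ket{\Phi_y} := (\mbb{1}\otimes Z^y)\ket{\Phi_0}$. The $\{\ket{\Phi_y}\}_{y}$ are $d_Y$ of the generalized Bell states; a short calculation confirms they are orthonormal, $\ip{\Phi_{y'}}{\Phi_y} = \tfrac1{d_Y}\sum_k\omega^{k(y-y')} = \delta_{yy'}$, so that $M^\dagger M = \mbb{1}$ and $M$ is a legitimate isometric masker. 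Moreover each $\ket{\Phi_y}$ is maximally entangled, whence $\tr_A\op{\Phi_y}{\Phi_y} = \tr_B\op{\Phi_y}{\Phi_y} = \mbb{1}/d_Y$ for every $y$.

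Finally I would assemble the reduced channels. Applying $M$ to the output of $\mc{C}_\lambda$ yields the classical mixture $M\mc{C}_\lambda(\rho)M^\dagger = \sum_y q^\lambda_\rho(y)\op{\Phi_y}{\Phi_y}$, so for $X\in\{A,B\}$,
\begin{align}
\tr_X[M\mc{C}_\lambda(\rho)M^\dagger] = \sum_y q^\lambda_\rho(y)\,\tr_X\op{\Phi_y}{\Phi_y} = \frac{\mbb{1}}{d_Y}\sum_y q^\lambda_\rho(y) = \frac{\mbb{1}}{d_Y},\notag
\end{align}
using $\sum_y q^\lambda_\rho(y)=1$. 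Both reduced channels are thus the constant completely-depolarizing channel $\alpha(\cdot)=\beta(\cdot)=\mbb{1}/d_Y$, manifestly independent of $\lambda$, so the family is masked. I do not expect a genuine obstacle here; the only real content is the opening reduction. Once one recognizes that a classical channel can only ever emit a probabilistic mixture of the basis states $\op{y}{y}$, masking is guaranteed by any isometry that hides the identity of an orthonormal basis, and conjugate coding (a maximally entangled basis) supplies such a map with the added convenience that the marginals are maximally mixed and hence trivially $y$-independent on both sides.
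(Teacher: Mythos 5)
Your proof is correct and follows essentially the same route as the paper: the paper's masker $M\ket{j}=\frac{1}{\sqrt{d}}\sum_k w^{kj}\ket{kk}$ is exactly your $(\mbb{1}\otimes Z^y)\ket{\Phi_0}$, i.e.\ the map sending output basis states to generalized Bell states, with both arguments concluding that all marginals are maximally mixed. Your write-up merely fills in the verification the paper leaves implicit (orthonormality of the $\ket{\Phi_y}$ and the reduction from channel masking to masking the diagonal output basis).
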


\begin{proof}
    Let $S = \{\mc{C}_\lambda\}_\lambda$ be a collection of classical channels having the form
\begin{equation}
    \mc{C}_\lambda(\cdot) = \sum_{x\in\mc{X},y\in\mc{Y}} p_\lambda (y|x) \braket{x| \cdot | x} \op{y}{y}.
\end{equation}
Without loss of generality, suppose $\mc{Y}=\{1,2,\cdots,d\}$. We then define a quantum masker $M :\mbb{C}^d\to \mbb{C}^d\otimes\mbb{C}^d$ for $\mc{S}$ by its action:
\begin{align}
    M\ket{j} = \frac{1}{\sqrt{d}}\sum_{k=1}^{d}w^{kj}\ket{kk}
\end{align}
where $w = e^{2\pi i /d}$. One can easily verify that the reduced density matrices on subsystems $A$ and $B$ are always maximally mixed, $\mbb{1}_d / d$.
\end{proof}
\vspace{0.1mm}

\section{Conclusion}
In this paper, we have introduced and developed the concept of quantum channel masking, a dynamical extension of state masking in which the identity of a channel is hidden from local subsystems but remains globally recoverable. We provided a complete characterization of maskable sets of qudit unitary gates $\{U_n\}_{n=1}^N$, showing that masking is possible if and only if $\{U_1^\dagger U_n\}_{n=2}^N$ forms a pairwise commuting set. For the special case of qubit unitaries, we translated this into a geometric picture on the Bloch sphere corresponding to rotational symmetries.

We extended our analysis of channel masking to noisy qubit channels, characterizing all families of maskable Pauli channels.  Interestingly, there exist two-parameter families of Pauli channels that can be masked. 
We further showed that non-unital channels cannot be masked against the identity, whereas certain families of unital qubit channels allow for such masking.  This can be interpreted as a scenario in which the effect of noise is completely hidden from each subsystem. Beyond quantum systems, we showed that while classical channels cannot be masked by classical circuits, quantum maskers can successfully mask any set of arbitrary classical channels. Thus we have demonstrated a clear operational advantage of quantum operations.

Our work establishes a foundation for understanding concealment of quantum operations themselves, with implications in secure quantum information processing including applications in quantum secret sharing \cite{Karlsson1999, Hillery1999,Zukowski1998}, error correction \cite{Li2018,Han-2020a}, and bit commitment \cite{Mayers1997,Modi2018}. In particular, our results show that a secret can be encoded in a quantum channel and distributed among multiple parties such that no single party may recover it. Thereby our framework suggests a natural channel-based secret-sharing primitive by encoding a secret into the channel label. From the perspective of error correction, we have demonstrated that the effect of noise can be hidden entirely in the correlations of subsystems in the masking of $\{\id, \mc{E}\}$. This leaves reduced systems entirely unaffected by noise.

Several promising directions for future research emerge from our findings. A natural extension is to investigate channel masking mixed ancilla states, as we consider only the pure ancilla case. This may broaden the class of maskable channels. Additionally, Modi et al. propose a no-qubit commitment protocol based on state masking, in which the committing party can always cheat \cite{Modi2018}. It would be of interest to explore the channel masking analogue to this qubit commitment protocol. Finally, extending geometrical interpretations of channel masking beyond qubits to higher-dimensional systems may reveal interesting structural limitations.\\

\section*{Acknowledgments}
This work was supported by NSF Grant No. 2112890 and the IBM-Illinois Discovery Accelerator Institute.



\nocite{*}
\bibliography{paper}

\end{document}